\newcommand{\bN} { {\mathbb{N}}}
\newcommand{\bC} { {\mathbb{C}}}
\newcommand{\bR} { {\mathbb{R}}}
\def\res{\operatorname{res}}
\newtheorem{thm}{Theorem}[section] \newtheorem{cor}[thm]{Corollary}
\newtheorem{lem}[thm]{Lemma}
\newtheorem{prop}[thm]{Proposition}
\newtheorem{exam}[thm]{Example}
\newtheorem{defn}[thm]{Definition}
\newtheorem{notation}[thm]{Notation}
\newtheorem{remark}[thm]{Remark}
\newcommand{\myitem}[1]{%
\item[(#1)]\protected@edef\@currentlabel{#1}%
}
\def\eatspace#1{#1}
\def\step#1#2{\par\kern1pt\hangindent#2em\hangafter=1\noindent\rlap{\small#1}\kern#2em\relax\eatspace}
\def\<#1>{\langle#1\rangle}
\def\ord{\operatorname{ord}}
\def\lc{\operatorname{lc}}
\newcommand{\cK}{\mathcal{K}}
\newcommand{\bZ}{\mathbb{Z}}
\newcommand{\cS}{\mathcal{S}}
\newcommand{\cN}{\mathcal{N}}
\newcommand{\cB}{\mathcal{B}}
\newcommand{\DR}{{k\langle D \rangle}}
\newcommand{\ind}{{\rm ind}}
\newcommand{\sind}{{\rm Sind}}
\newcommand{\den}{{\rm den}}
\newcommand{\num}{{\rm num}}
\def\ord{\operatorname{ord}}
\begin{document}
\fancyhead{}
\title{Stability Problems on D-finite Functions}
\thanks{{ S.\ Chen was partially supported by the NSFC
grants (No. \ 12271511 and No. \ 11688101), CAS Project for Young Scientists in Basic Research (Grant No.\ YSBR-034),
the Fund of the Youth Innovation Promotion Association (Grant No.\ Y2022001), CAS,
and the National Key Research and Development Project 2020YFA0712300. R. \ Feng, Z. \ Guo, W. \ Lu were partially supported by
the NSFC grants 11771433, 11688101, Beijing Natural Science Foundation under Grant
Z190004 and National Key Research and Development Project 2020YFA0712300.}}

\author{Shaoshi Chen, Ruyong Feng, Zewang Guo and Wei Lu}
\affiliation{%
  \institution{KLMM, Academy of Mathematics and Systems Science\\ University of Chinese Academy of Sciences}
  \country{Chinese Academy of Sciences, Beijing 100190, China}
}
\email{{schen, ryfeng, guozewang}@amss.ac.cn, luwei18@mails.ucas.ac.cn}
%
%

\begin{abstract}
 This paper continues the studies of symbolic integration by
 focusing on the stability problems on D-finite functions. We introduce the notion of stability index in order to
 investigate the order growth of the differential operators satisfied by iterated integrals of D-finite functions and
 determine bounds and exact formula for stability indices of several special classes of differential operators.
 With the basic properties of stability index, we completely solve the stability problem on general hyperexponential functions.

\end{abstract}
\begin{CCSXML}
	<ccs2012>
	<concept>
	<concept_id>10010147.10010148.10010149.10010150</concept_id>
	<concept_desc>Computing methodologies~Algebraic algorithms</concept_desc>
	<concept_significance>500</concept_significance>
	</concept>
	</ccs2012>
\end{CCSXML}

\ccsdesc[500]{Computing methodologies~Algebraic algorithms}

\keywords{D-finite functions; stable Sets; symbolic integration}
\maketitle

\section{Introduction}

New functions were introduced when certain integrals could not be evaluated in term of a specific
class of functions. We need to introduce the logarithmic function $\log(x)$ since $1/x$ has no
primitive in the field of rational functions. The special functions, such as $\int \exp(x^2) \, dx, \int 1/\log(x) \, dx$ etc., are introduced likewise since they are not elementary. We can continue to study the integrals of the newly introduced functions,
e.g., the indefinite integral of $\log(x)$ is $x\log(x)-x$.  A natural question is whether we need keep introducing new functions in order to integrate a given function iteratively.

Motivated by irrationality proofs in number theory and the above question, Chen initialized the dynamical aspect of symbolic integration  by studying
some stability problems in differential fields in~\cite{chen2022ISSAC}. For a given special function $f$ in certain differential ring $(R, D)$, the stability problem is
deciding whether there exists a sequence $\{g_i\}_{i\geq 0}$ in $R$ such that $f = D^i(g_i)$ for all $i\in \bN$.
Such stable special functions appear in any integro-differential algebra, which is an algebra equipped with derivation and integration operators.  The theory of integro-differential algebras has been developed in~\cite{Guo2014}. Two basic examples of integro-differential algebras are the ring of polynomials $\bC[x]$ and the ring $C^\infty(\bR)$ of smooth functions on $\bR$. From the algorithmic point of view, the algebra  $C^\infty(\bR)$ is too big and we are more interested in some subalgebra whose elements are computable, such as the subalgebras $\bC[x, \log(x)]$ and $\bC[x, \exp(x)]$. To find more rich examples, it is necessary to classify certain class of functions that are iteratively integrable in this class.

Focusing on elementary functions, Chen in~\cite{chen2022ISSAC} had solved  the stability problems on three families of special functions including rational functions, logarithmic functions, and exponential functions. The main goal of this paper is to continue the work~\cite{chen2022ISSAC} by focusing on D-finite functions using the language of differential operators.

The notion of D-finite power series was first introduced by Stanley~\cite{Stanley1980} in 1980 in the univariate case and later studied by Lipshitz in the multivariate case~\cite{Lipshitz1989}. D-finite functions are
also called holonomic functions that had played a significant role in Zeilberger's method of creative telescoping~\cite{Zeilberger1990, Wilf1992}.
These series, like algebraic numbers, can be algorithmically manipulated via its defining linear differential equations~\cite{AbramovLeLi2005, Salvy2019}. For a comprehensive
introduction to theory and algorithms for D-finite functions, one can see the coming monograph by Kauers~\cite{DfiniteBook}.
The indefinite integration problem on D-finite functions
was studied by Abramov and van Hoeij in~\cite{AbramovHoeij1997}.
The main goal of this paper  is to investigate Abramov-van Hoeij's algorithm iteratively.

The remainder of this paper is organized as follows. We recall some basic terminologies in differential algebra and then define
the notion of principal integrals related to iterated integration in Section~\ref{sec:principalintegrals}. We will focus on studying the order of the differential operators satisfied
by principal integrals in Section~\ref{sec:orderoftheintegrals}, which inspires us to introduce the notion of stability index.  In Section~\ref{sec:stability}, we will
determine the stability indices of two special classes of differential operators including Katz's operators and first-order operators.
With the help of stability index, we will completely solve the stability problem for general hyperexponential functions in Section~\ref{sec:hyperexp}
which generalizes the results in~\cite{chen2022ISSAC}. As an application, we present an exact formula for the stability index of a special class of rational functions in Section~\ref{sec:rationalfunctions}.

\section{Principal integrals}\label{sec:principalintegrals}
Throughout this paper, let $C$ be any algebraically closed field of characteristic zero and $k=C(x)$ be
the differential field with the usual derivation $'$ such that $x'=1$ and $c'=0$ for all $c\in C$.
Let $\cK$ be a differential closure of $C(x)$ (see Kolchin's book~\cite{kolchin} for the existence of such closures).
Any system of algebraic differential equations with coefficients from $k$ has solutions in $\cK$ if it has any solution in some differential extension of $k$.
Let $k\langle D \rangle$ be the ring of linear differential operators with coefficients in $k$ in which we have
\[D \cdot f = f\cdot D + f', \quad \text{for any $f\in k$}.\]
The ring $k\langle D \rangle$ is a left Euclidean domain in which any left ideal is principal~\cite{Ore1933}.
For an $m\in \bZ$, $\bZ_{\leq m}$ (resp., we let $\bZ_{\geq m}$) stand for the set of all integers not greater than $m$ (resp., not less than $m$).

Let $L$ be a nonzero operator in $k\langle D \rangle$. Write
\[
    L= \sum_{i=0}^{n} a_i D^i, \quad \text{where $a_i  \in k$ and $a_n\neq 0$}.
\]
The integer $n$ is called the order of $L$, denoted by $\ord(L)$, and $a_n$ is called the leading coefficient
of $L$, denoted by $\lc(L)$. If $a_n=1$, then $L$ is called a monic operator.
The adjoint operator of $L$, denoted by $L^*$, is defined to be
\[
    L^* := \sum_{i=0}^n (-D)^i a_i.
\]

The field $\cK$ can be equipped with a left $\DR$-module structure with the action $L(f) = \sum_{i=0}^{n} a_i f^{(i)}$ for any $L=\sum_{i=0}^{n} a_i D^i\in \DR$ and $f\in \cK$.
The annihilating ideal
\[\text{Ann}_\DR(f) := \{P\in \DR \mid P(f)=0\}\] is a left ideal of $\DR$. Since $\DR$ is a left Euclidean domain,
we have  $\text{Ann}_\DR(f) = \langle L \rangle$ for some monic operator $L\in \DR$.
\begin{defn}
An element $f\in \cK$ is said to be~\emph{D-finite} over $k$ if $\text{Ann}_\DR(f) =  \langle L \rangle$ for some nonzero and monic operator $L\in \DR$.
We call $L$ the defining operator for $f$, denoted by $L_f$, whose order is called the order of $f$, denoted by $\ord(f)$.
For convenience, we set $\ord(f)=\infty$ if $f$ is not D-finite.
\end{defn}

Lipshitz~\cite{Lipshitz1989} showed that the set of all D-finite elements over $k$ forms a subalgebra of $\cK$, which is also closed under taking
derivatives and integrals. We will investigate how the order changes when integrating a D-finite function.

\begin{defn}
For $f\in \cK$ and $i\in \bZ_{>0}$, $g\in \cK$ is called an $i$-th primitive or an $i$-th integral of $f$ if $f=D^i(g)$. When $i=1$, we simply call $g$ an integral of $f$.
\end{defn}

Suppose $f$ is a D-finite function and $g_i$ is an $i$-th integral of $f$. Then $g_i$ is also D-finite and $\ord(g_i)\leq \ord(f)+i$. Set
\begin{equation*}
\cN_i(f)=\min \{\ord(g) \mid \mbox{$g$ is an $i$-th integral of $f$}\}.
\end{equation*}
The following example shows that different i-th integrals of a given function may have different orders.

{
\begin{exam}
\label{exam:principalintegrals}
  Let $f=\exp(x)$. Then $f$ itself is a second integral of $f$ and $\ord(f)=1$. Since every second integral of $f$ must be of order not less than 1, $\cN_2(f)=1$. On the other hand, there exist second integrals of $f$ that are of order greater than 1. In fact, each second integral of $f$ is of the form $\exp(x)+\alpha x+\beta$ with $\alpha,\beta\in C$. Assume that not all $\alpha,\beta$ are zero. We claim that $\ord(\exp(x)+\alpha x+\beta)=2$. First, as $\exp(x)+\alpha x+\beta$ is not hyperexponential, $\exp(x)+\alpha x+\beta$ can not be annihilated by any first-oder operator. Second, if $\alpha\neq 0$ then $\exp(x)+\alpha x+\beta$ is annihilated by
  $$
     \left(x+\frac{\beta}{\alpha}+1\right)D^2-\left(x+\frac{\beta}{\alpha}\right)D-1,
  $$
  and if $\alpha=0$ and $\beta\neq 0$ then $\exp(x)+\beta$ is annihilated by $D^2-D$. So $\ord(\exp(x)+\alpha x+\beta)=2$ and the claim is shown.
\end{exam}
The example above motivates the following notion of principal integrals.
}
\begin{defn}
\label{def:principalintegraloffunctions}
The $i$-th integrals of order $\cN_i(f)$ are called the principal $i$-th integrals of $f$.
\end{defn}
{It can happen that all integrals are principal as shown in the following example.
\begin{exam}
Let $f=\frac{1}{x}$. Then each principal second integral of $f$ is of the form $x\ln(x)+\alpha x+\beta$ with $\alpha,\beta\in C$. It is easy to check that $x\ln(x)+\alpha x+\beta$ is annihilated by
$$
   (x-\beta)D^2-D+\frac{1}{x}.
$$
A similar argument as in Example~\ref{exam:principalintegrals} implies that $\ord(x\ln(x)+\alpha x+\beta)=2$ for all $\alpha,\beta\in C$.
\end{exam}
}
It was proved in~\cite{AbramovHoeij1997,AbramovHoeij1999} that $\cN_1(f)=\ord(f)$ if and only if $L_f^*(y)=1$ has a solution in $k$. If $L_f^*(y)=1$ has no solution in $k$ then $\cN_1(f)=\ord(f)+1$ and moreover $L_g=L_fD$ for any integrals $g$ of $f$. Suppose that $L_f^*(y)=1$ has a solution $l$ in $k$. Then $(1-lL_f)^*(1)=0$ and so there is a unique $H_l\in \DR$ of order $n-1$ such that
\begin{equation*}
\label{eqn:formula1}
    1-lL_f=DH_l
\end{equation*}
where $n=\ord(f)$.
Note that $-l$ is the leading coefficient of $H_l$.
Set $g=H_l(f)$. Then $g$ is an integral of $f$ and
\begin{equation*}
\label{eqn:formula2}
    L_g =\frac{1}{l}(1-H_lD).
\end{equation*}
For a nonzero operator $P\in \DR$ and $q\in C[x]$, define
$$
   \delta(P,q)=\begin{cases} 1 & \mbox{$P^*(y)=q$ has a solution in $C(x)$}\\
     0 & \mbox{otherwise}
   \end{cases}.
$$
When $q=1$, we abbreviate $\delta(P,q)$ to $\delta(P)$. The above discussions motivate the definition below.
\begin{defn}
\label{defn:principalintegralofoperator}
Suppose that $L$ is a nonzero monic operator in $\DR$. A principal integral of $L$ is defined to be
$$
\begin{cases}
    LD & \delta(L)=0\\
    \frac{1}{l}(1-H_lD) & \delta(L)=1
\end{cases}
$$
where $l\in k$ is a solution of $L^*(y)=1$ and $H_l$ is the unique operator in $\DR$ of order $\ord(L)-1$ such that $lL+DH_l=1$. We also call $\frac{1}{l}(1-H_lD)$ the principal integral of $L$ with respect to $l$. An $i$-th principal integral of $L$ is defined iteratively.
\end{defn}
\begin{remark}\label{rem:shouldprovewelldefined}
  {Notice that in the case $\delta(L)=1$ principal integrals of $L$ depend on the rational solutions of $L^*(y)=1$ and thus they are generally not unique. For example, let $L=D$. The adjoint $L^*$ of $L$ is $-D$. For every $c\in C$, $-x+c$ is a rational solution of $L^*(y)=1$. One sees that $x-c$ satisfies that $(-x+c)D+D(x-c)=1$.
  Since
  $$
     \frac{1}{-x+c}(1-(x-c)D)=D-\frac{1}{x-c},
  $$
  by definition, $D-\frac{1}{x-c}$ is a principal integral of $L$ for any $c\in C$. Therefore the principal integrals of $L$ are not unique. However they have the same order.}
\end{remark}
\begin{lem}
\label{lm:principalintegrals}
{Suppose that $f\in \cK$ and $L_i$ is an $i$-th principal integral of $L_f$. Then there is an $i$-th integral of $f$, say $g_i$, such that $L_i=L_{g_i}$.}
\end{lem}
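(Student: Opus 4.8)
The plan is to reduce everything to the single-step case and then run a routine induction on $i$. The substantive statement is the following \emph{key step}: if $h\in\cK$ is D-finite with defining operator $L_h$ and $M$ is a (first) principal integral of $L_h$ in the sense of Definition~\ref{defn:principalintegralofoperator}, then there is an integral $g$ of $h$ (i.e.\ $g'=h$) with $M=L_g$. Granting this, I would write the chain $L_f=L^{(0)},L^{(1)},\dots,L^{(i)}=L_i$ in which each $L^{(j)}$ is a principal integral of $L^{(j-1)}$ (this is exactly what the iterative clause of Definition~\ref{defn:principalintegralofoperator} provides), and apply the key step repeatedly to build $g^{(0)}=f,g^{(1)},\dots,g^{(i)}=g_i$ with $g^{(j)}$ an integral of $g^{(j-1)}$ and $L^{(j)}=L_{g^{(j)}}$. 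At each stage $g^{(j-1)}$ is D-finite (integrals of D-finite functions are D-finite by Lipshitz), so $L^{(j-1)}=L_{g^{(j-1)}}$ is a legitimate defining operator and the key step applies. Then $g_i$ is an $i$-th integral of $f$ and $L_i=L_{g_i}$, as wanted.

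For the key step I would split on the value of $\delta(L_h)$, mirroring the discussion preceding Definition~\ref{defn:principalintegralofoperator}. If $\delta(L_h)=0$ then $M=L_hD$, and the cited result of Abramov-van Hoeij gives $L_g=L_hD=M$ for \emph{every} integral $g$ of $h$; since such an integral exists (closure under integration, or solvability of $y'=h$ in the differentially closed $\cK$), we are done immediately. If $\delta(L_h)=1$, pick the rational solution $l$ of $L_h^*(y)=1$ used to form $M$, together with the unique $H_l$ of order $\ord(L_h)-1$ satisfying $lL_h+DH_l=1$, so that $M=\frac1l(1-H_lD)$. Setting $g:=H_l(h)$, the identity $1-lL_h=DH_l$ applied to $h$ (and $L_h(h)=0$) gives $h=D(H_l(h))=g'$, so $g$ is an integral of $h$; and $(1-H_lD)(g)=g-H_l(g')=g-H_l(h)=0$ shows $M$ annihilates $g$.

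The heart of the matter, and the place I expect the only real friction, is upgrading ``$M$ annihilates $g$'' to ``$M=L_g$'' in the case $\delta(L_h)=1$. For this I would first note $M$ is monic of order $\ord(L_h)$: since $\lc(H_l)=-l$, the top term of $H_lD$ is $-lD^{\ord(L_h)}$, so $1-H_lD$ has leading coefficient $l$ and $\frac1l(1-H_lD)$ is monic of order $\ord(L_h)=\ord(h)$. Next, because $\delta(L_h)=1$ forces $\cN_1(h)=\ord(h)$, every integral of $h$ has order at least $\ord(h)$, whence $\ord(g)\ge\ord(h)$; on the other hand $M$ is an annihilator of $g$ of order $\ord(h)$, so $\ord(g)\le\ord(h)$. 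Thus $\ord(g)=\ord(h)=\ord(M)$, and a monic annihilator of $g$ whose order equals $\ord(g)$ must coincide with the monic generator $L_g$ of $\mathrm{Ann}_{\DR}(g)$ (any annihilator is a left multiple of $L_g$, and equal orders force the cofactor to be the scalar $1$). This identifies $M=L_g$ and completes the key step; the induction above then yields the lemma.
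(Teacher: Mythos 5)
Your proposal is correct and takes essentially the same approach as the paper: an induction on $i$ that reduces everything to the single-integration step, which is precisely the Abramov--van Hoeij analysis recalled in the discussion before Definition~\ref{defn:principalintegralofoperator}. The only difference is one of detail: you prove that key step in full (including the order argument upgrading ``$M$ annihilates $g$'' to ``$M=L_g$''), whereas the paper's base case simply cites that preceding discussion.
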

\begin{proof}
{We shall show the lemma by induction on $i$. The case $i=1$ follows from Definition~\ref{defn:principalintegralofoperator} and the discussions before Definition~\ref{defn:principalintegralofoperator}. Suppose that $i>1$ and the assertion holds for the case $i-1$. Let $L_{i-1}$ be an $(i-1)$-th principal integral of $L_f$ and $L_i$ is a principal integral of $L_{i-1}$. By induction hypothesis, there is an $(i-1)$-th integral $g_{i-1}$ of $f$ such that $L_{i-1}=L_{g_{i-1}}$. Since $L_i$ is a principal integral of $L_{i-1}=L_{g_{i-1}}$, by induction hypothesis again, there is an integral $g_i$ of $g_{i-1}$ such that $L_i=L_{g_i}$. The lemma then follows from the fact that $g_i$ is an $i$-th integral of $f$.}
\end{proof}
{In next section, we shall show that the $g_i$ in Lemma~\ref{lm:principalintegrals} is actually an $i$-th principal integral of $f$. Furthermore, we shall show that $\ord(L_i)=\cN_i(f)$, and Definition~\ref{def:principalintegraloffunctions} and Definition~\ref{defn:principalintegralofoperator} are consistent in some sense.
 }


\section{Orders of principal integrals}
\label{sec:orderoftheintegrals}
Throughout this section, $L$ is always a monic operator in $\DR$.
 Suppose that $L_i$ is a principal $i$-th integral of $L$ for any positive integer $i$.
We are going to prove that $\ord(L_{i+1})=\ord(L_i)$ if and only if there is a polynomial $p\in C[x]$ of degree $i$ such that $L_i^*(y)=p$ has a solution in $k$. This generalizes the results in~\cite{AbramovHoeij1997,AbramovHoeij1999} for the case when $i=1$. For a rational function $l\in k$, $l$ is said to be rationally integrable if $l=D(h)$ for some $h\in k$.
\begin{lem}
\label{lm:rationallyintegrable}
Suppose that $P=1+M D$ and $l$ is a rational solution of $P^*(y)=p$, where $M\in \DR$ and $p\in C[x]$. Then $l$ is rationally integrable.
\end{lem}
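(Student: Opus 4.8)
The plan is to exploit the very rigid shape of $P=1+MD$ by computing its adjoint explicitly and reading off the equation that $l$ must satisfy. Recall that $* \colon \DR \to \DR$ is an anti-homomorphism with $D^*=-D$ and $c^*=c$ for $c\in k$. Hence $(MD)^*=D^*M^*=-DM^*$, and so
\[
  P^* = (1+MD)^* = 1 - DM^*.
\]
I would then apply this to $l$: the hypothesis $P^*(l)=p$ becomes $l - D\bigl(M^*(l)\bigr) = p$, i.e.
\[
  l = p + D\bigl(M^*(l)\bigr).
\]
The whole point of the normalization $P=1+MD$ is that the "$1$" contributes a bare copy of $l$, while everything else is packaged as a total derivative.

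From this identity the conclusion is essentially immediate. The term $D\bigl(M^*(l)\bigr)$ is rationally integrable by inspection, since $M^*\in\DR$ and $l\in k$ give $M^*(l)\in k$, so it is literally the derivative of a rational function. The term $p$ is rationally integrable because it lies in $C[x]$: as $C$ has characteristic zero, $p=\sum_j c_j x^j$ has the polynomial primitive $\sum_j \frac{c_j}{j+1}x^{j+1}\in C[x]\subseteq k$. Since the set of rationally integrable elements of $k$ is closed under addition, writing $h := M^*(l) + \sum_j \frac{c_j}{j+1}x^{j+1}\in k$ yields $l = D(h)$, which is exactly the claim that $l$ is rationally integrable.

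There is no serious obstacle here; the one point that genuinely deserves care is where the hypothesis $p\in C[x]$ (as opposed to $p\in k$) is used. It enters precisely to guarantee that $p$ itself admits a rational primitive: a general $p\in k$ need not be rationally integrable (for instance $1/x$ has no rational primitive), and in that case the argument—and indeed the conclusion—would break down. Thus the proof hinges on the two structural features of the statement working in tandem: the leading $1$ in $P$ isolates $l$ modulo a total derivative, and the polynomiality of $p$ supplies the remaining rational primitive.
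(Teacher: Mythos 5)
Your proposal is correct and follows essentially the same route as the paper's own proof: compute $P^*=1-DM^*$, rewrite the hypothesis as $l-D\bigl(M^*(l)\bigr)=p$, take a polynomial primitive $q$ of $p$ (which exists since $p\in C[x]$ and $\operatorname{char}C=0$), and conclude $l=D\bigl(q+M^*(l)\bigr)$ with $q+M^*(l)\in k$. Your closing remark about where the hypothesis $p\in C[x]$ is genuinely needed is a fair observation, but it adds nothing beyond the paper's argument.
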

\begin{proof}
Note that $P^*=1-DM^*$. Since $l$ is a solution of $P^*(y)
 = p$ in $k$, we have $l-D(M^*(l)) = p$.
Let $q\in C[x]$ be such that $p = q'$. Then $l=D(q + M^*(l))$. The lemma then follows from the fact that $q+M^*(l)\in k$.
\end{proof}

\begin{prop}
\label{prop:expressionofLDn}
Suppose that $L_0=L$ and $L_{i+1}$ is a principal integral of $L_{i}$ for any $i\geq 0$. Set $I_{-1}=1$ and for each $i\geq 0$ set
$$
    I_i=\begin{cases}
        1 & \delta(L_i)=0\\
       D+\frac{D(l_i)}{l_i} & \delta(L_i)=1
    \end{cases}
$$
where $l_i\in k$ is a solution of $L_i^*(y)=1$ such that $L_{i+1}$ is the principal integral of $L_i$ with respect to $l_i$.
Then for each $n\geq 0$,
\begin{equation}
\label{eqn:expressionLDn}
    LD^n = \left(\prod_{i=-1}^{n-1} I_i \right)L_n.
\end{equation}
\end{prop}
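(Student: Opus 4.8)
The plan is to prove \eqref{eqn:expressionLDn} by induction on $n$, reducing the whole statement to a single one-step identity relating $L_n$ to its principal integral $L_{n+1}$. The base case $n=0$ is immediate: the product $\prod_{i=-1}^{-1} I_i$ consists of the single factor $I_{-1}=1$, so the right-hand side is $1\cdot L_0 = L = LD^0$. For the inductive step I would assume \eqref{eqn:expressionLDn} for $n$ and write
\[
    LD^{n+1} = (LD^n)D = \Bigl(\prod_{i=-1}^{n-1} I_i\Bigr) L_n D,
\]
while the target right-hand side factors as $\prod_{i=-1}^{n} I_i = \bigl(\prod_{i=-1}^{n-1} I_i\bigr) I_n$. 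Hence everything comes down to establishing the key identity $L_n D = I_n L_{n+1}$ for every $n\geq 0$, which I would then prove by the two cases in the definition of $I_n$.

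In the case $\delta(L_n)=0$ there is nothing to do: by Definition~\ref{defn:principalintegralofoperator} the principal integral is $L_{n+1}=L_nD$, and $I_n=1$, so $I_n L_{n+1}=L_nD$ trivially. The substantive case is $\delta(L_n)=1$. Writing $l=l_n$ and $H=H_{l_n}$, we have $I_n = D + \tfrac{D(l)}{l}$ and $L_{n+1}=\tfrac{1}{l}(1-HD)$, together with the defining relation $lL_n + DH = 1$. The computational heart is a commutation simplification in $\DR$: applying the rule $D\cdot \tfrac1l = \tfrac1l D - \tfrac{D(l)}{l^2}$ gives
\[
    \Bigl(D+\tfrac{D(l)}{l}\Bigr)\tfrac1l = \tfrac1l D - \tfrac{D(l)}{l^2} + \tfrac{D(l)}{l^2} = \tfrac1l D,
\]
so that $I_n L_{n+1} = \tfrac1l D(1-HD) = \tfrac1l(D - DHD)$. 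Substituting $DH = 1 - lL_n$ yields $DHD = D - lL_nD$, whence $D - DHD = lL_nD$ and therefore $I_n L_{n+1} = \tfrac1l\cdot lL_nD = L_nD$, as required.

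The only delicate point, which I expect to be the main obstacle, is executing these noncommutative manipulations in $\DR$ correctly — in particular keeping careful track of the order in which $D$ and the multiplication operator $\tfrac1l$ are composed, since $D$ does not commute with $\tfrac1l$. Once the commutation identity $\bigl(D+\tfrac{D(l)}{l}\bigr)\tfrac1l = \tfrac1l D$ is secured, the remainder is a direct substitution using the defining relation $lL_n + DH = 1$ for $H_{l_n}$, and the induction closes immediately.
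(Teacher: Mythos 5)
Your proof is correct and takes essentially the same approach as the paper: induction on $n$ reduced to the one-step identity $L_nD = I_nL_{n+1}$, with the trivial case $\delta(L_n)=0$ and the substantive case handled via the defining relation $lL_n + DH_{l} = 1$ together with the Ore commutation rule $\bigl(D+\tfrac{D(l)}{l}\bigr)\tfrac{1}{l} = \tfrac{1}{l}D$. The paper runs the identical chain of equalities, pivoting on the same intermediate expression $\tfrac{1}{l}D(1-H_{l}D)$, just traversed in the opposite direction (from $L_nD$ to $I_nL_{n+1}$ rather than from $I_nL_{n+1}$ to $L_nD$).
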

\begin{proof}
We shall prove the proposition by induction on $n$. The case $n=0$ is clear. Assume that $n>0$ and the assertion holds for $n-1$. By induction hypothesis,
$$
 LD^{n-1}=\left(\prod_{i=-1}^{n-2}I_i\right)L_{n-1}.
$$
If $\delta(L_{n-1})=0$ then $L_n=L_{n-1}D$ and $I_{n-1}=1$. Thus
$$
  LD^n=LD^{n-1}D=\left(\prod_{i=-1}^{n-2}I_i\right)L_{n-1}D=\left(\prod_{i=-1}^{n-1}I_i\right)L_{n}.
$$
If $\delta(L_{n-1})=1$ then $l_{n-1}L_{n-1}+DH_{l_{n-1}}=1$ and $L_n=\frac{1}{l_{n-1}}(1-H_{l_{n-1}}D)$. These imply that
\begin{align*}
  L_{n-1}D&=\frac{1}{l_{n-1}}(1-DH_{l_{n-1}})D=\frac{1}{l_{n-1}}D(1-H_{l_{n-1}}D)\\
  & =\frac{1}{l_{n-1}}D l_{n-1} L_n =\left(D+\frac{D(l_{n-1})}{l_{n-1}}\right)L_n=I_{n-1}L_n.
\end{align*}
Therefore
$$
  LD^n=LD^{n-1}D=\left(\prod_{i=-1}^{n-2}I_i\right)L_{n-1}D=\left(\prod_{i=-1}^{n-1}I_i\right)L_n.
$$
The assertion holds for all nonnegative integers $n$.
\end{proof}
\begin{lem}
\label{lm:rationalsolutions}
Suppose that $L_{n}$ is a principal integral of $L_{n-1}$. Then for every $p\in C[x]$, $\delta(L_n,p)=\delta(L_{n-1},q)$ for some $q\in C[x]$ with $p=q'$. In other words, $L_n^*(y)=p$ has a solution in $k$ if and only if $L_{n-1}^*(y)=q$ has a solution in $k$.
\end{lem}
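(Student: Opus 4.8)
The plan is to split the argument according to the two branches of Definition~\ref{defn:principalintegralofoperator}, the case $\delta(L_{n-1})=0$ and the case $\delta(L_{n-1})=1$, and in each branch to convert the relation between $L_n$ and $L_{n-1}$ into a relation between the inhomogeneous adjoint equations $L_n^*(y)=p$ and $L_{n-1}^*(y)=q$. I would organize everything around the single equivalence
\[
\delta(L_n,p)=1 \iff \text{there exists } q\in C[x]\text{ with } q'=p \text{ and } \delta(L_{n-1},q)=1 ,
\]
since this immediately yields the stated form of the lemma: if $\delta(L_n,p)=1$ the witnessing $q$ makes both sides of $\delta(L_n,p)=\delta(L_{n-1},q)$ equal to $1$, while if $\delta(L_n,p)=0$ the contrapositive forces $\delta(L_{n-1},q)=0$ for every polynomial antiderivative $q$ of $p$, so any such $q$ works.

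In the branch $\delta(L_{n-1})=0$ we have $L_n=L_{n-1}D$, and taking adjoints (using $(AB)^*=B^*A^*$ and $D^*=-D$) gives $L_n^*=-DL_{n-1}^*$. A rational solution $y_0$ of $L_n^*(y)=p$ then makes $z:=L_{n-1}^*(y_0)$ an antiderivative of $-p$; I would invoke the elementary fact that a rational function with polynomial derivative is itself a polynomial to conclude $z\in C[x]$, whence $q:=-z$ is a polynomial antiderivative of $p$ solving $L_{n-1}^*(-y_0)=q$. The converse direction is obtained at once by differentiating $L_{n-1}^*(y_0)=q$.

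For the branch $\delta(L_{n-1})=1$ I would write $l=l_{n-1}$, $H=H_l$ and start from the defining identity $lL_{n-1}+DH=1$. Its adjoint gives $L_{n-1}^*l=1+H^*D$, and the shape $L_n=\frac1l(1-HD)$ yields $L_n^*l=1+DH^*$. Substituting $u=y/l$ turns $L_n^*(y)=p$ into $P^*(u)=p$ with $P=1-HD$, which is of the form $1+MD$ treated in Lemma~\ref{lm:rationallyintegrable}. That lemma guarantees any solution $u$ is rationally integrable, and its proof in fact produces $w=q_*-H^*(u)\in k$ with $w'=u$ for a polynomial $q_*$ satisfying $q_*'=p$; a one-line check then gives $w+H^*(w')=q_*$, i.e.\ $y=lw$ solves $L_{n-1}^*(y)=q_*$. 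Conversely, differentiating $w+H^*(w')=q$ and setting $u=w'$ recovers $P^*(u)=p$ and hence a solution of $L_n^*(y)=p$.

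I expect the main obstacle to be this backward direction when $\delta(L_{n-1})=1$: converting a rational solution of the integrated operator's equation back into a solution of $L_{n-1}^*(y)=q$ whose right-hand side $q$ is genuinely a polynomial. This is precisely where Lemma~\ref{lm:rationallyintegrable} does the heavy lifting, both placing the antiderivative $w$ inside $k$ and identifying the correct polynomial $q_*=w+H^*(w')$. Everything else—the adjoint identities and the fact about rational functions with polynomial derivatives—should reduce to routine verification.
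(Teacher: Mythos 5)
Your proof is correct and follows essentially the same route as the paper's: the same case split on $\delta(L_{n-1})\in\{0,1\}$, the same substitution $u=y/l_{n-1}$ reducing to the operator $1-HD$, the same appeal to Lemma~\ref{lm:rationallyintegrable} for the hard direction, and the same adjoint identities derived from $lL_{n-1}+DH_l=1$. The only (cosmetic) difference is that you extract the explicit antiderivative $w=q_*-H^*(u)$ from the proof of Lemma~\ref{lm:rationallyintegrable}, which lands exactly on the polynomial $q_*$, whereas the paper uses only the existence of an antiderivative $g$ and then fixes the right-hand side up to an additive constant.
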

\begin{proof}
Assume that $\delta(L_{n-1})=0$. Then $L_{n}=L_{n-1}D$ and thus $L_{n}^*=-DL_{n-1}^*$. One can verify that for each $h\in k$, $L_{n}^*(h)=p$ if and only if $L_{n-1}^*(-h)=q$ for some $q\in C[x]$ with $p=q'$. Now suppose that $\delta(L_{n-1})=1$ and $L_{n}$ is the principal integral of $L_{n-1}$ with respect to $l_{n-1}$. Then $L_{n}=1/l_{n-1}(1-H_{l_{n-1}}D)$, where $H_{l_{n-1}}\in \DR$ satisfies that $l_{n-1}L_{n-1}+DH_{l_{n-1}}=1$. Suppose that $h\in k$ is a solution of $L_{n}^*(y)=p$. Then $h/l_{n-1}$ is a solution of $(1-H_{l_{n-1}}D)^*(y)=p$ in $k$. Due to Lemma~\ref{lm:rationallyintegrable}, there is a $g\in k$ such that $D(g)=h/l_{n-1}$. Since $l_{n-1}L_{n-1}D=D-DH_{l_{n-1}}D=D(1-H_{l_{n-1}}D)$, one has that
\begin{align*}
   DL_{n-1}^*(l_{n-1}g)&=DL_{n-1}^*l_{n-1}(g)=(1-H_{l_{n-1}}D)^*D(g)\\
   &=(1-H_{l_{n-1}}D)^*(h/l_{n-1})=p.
\end{align*}
Let $\tilde{q}\in C[x]$ be such that $D(\tilde{q})=p$. Then $D(L_{n-1}^*(l_{n-1}g)-\tilde{q})=0$ and thus $L_{n-1}^*(l_{n-1}g)-\tilde{q}=c\in C$. Set $q=\tilde{q}+c$. Then $L_{n-1}^*(l_{n-1}g)=q$. Conversely, assume that $h$ is a solution of $L_{n-1}^*(y)=q$ for some $q\in C[x]$ with $p=q'$. Then $DL_{n-1}^*(h)=p$. Set $\tilde{h}=l_{n-1}D(h/l_{n-1})$. One then has that
\begin{align*}
   L_{n}^*(\tilde{h})&=(1-H_{l_{n-1}}D)^*(\tilde{h}/l_{n-1})=(1-H_{l_{n-1}}D)^*D(h/l_{n-1})\\
   &=(1+DH_{l_{n-1}}^*)D\frac{1}{l_{n-1}}(h)=D(1+H_{l_{n-1}}^*D)\frac{1}{l_{n-1}}(h)\\
   &=D\left(\frac{1}{l_{n-1}}(1-DH_{l_{n-1}})\right)^*(h)=DL_{n-1}^*(h)=p.
\end{align*}
In other words, $L_{n}^*(y)=p$ has a solution in $k$.
\end{proof}

\begin{prop}
\label{prop:ratioalsolutions}
Let $L_i$ be as in Proposition~\ref{prop:expressionofLDn}.
Then
$$
\delta(L_n,p)=\delta(LD^n,p)
$$
for any $p\in C[x]\setminus\{0\}$ and nonnegative integer $n$.
\end{prop}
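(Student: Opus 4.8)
The plan is to argue by induction on $n$, reducing both $\delta(LD^n,p)$ and $\delta(L_n,p)$ to a statement one level down (at $LD^{n-1}$ and $L_{n-1}$ respectively) evaluated at an antiderivative of $p$, and then invoking the induction hypothesis. The base case $n=0$ is immediate, since $LD^0=L=L_0$, so $\delta(L_0,p)=\delta(LD^0,p)$ for every $p$.

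The first ingredient I would record is a general \emph{descent} identity for right-multiplication by $D$: for any nonzero $M\in\DR$ and any nonzero $p\in C[x]$,
$$
  \delta(MD,p)=1 \iff \text{there is } q\in C[x] \text{ with } q'=p \text{ and } \delta(M,q)=1.
$$
This is proved by the same adjoint computation used in the case $\delta(L_{n-1})=0$ of Lemma~\ref{lm:rationalsolutions}: since $(MD)^*=-DM^*$, a rational $h$ solves $(MD)^*(y)=p$ exactly when $(M^*(h))'=-p$; choosing an antiderivative of $p$ and absorbing the integration constant shows this is equivalent to $M^*(-h)=q$ for a suitable polynomial $q$ with $q'=p$, and conversely ($y=-h$ with $M^*(-h)=q$ gives $(MD)^*(-h)=Dq=p$). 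Applying this with $M=LD^{n-1}$ turns $\delta(LD^n,p)=1$ into the existence of an antiderivative $q$ of $p$ with $\delta(LD^{n-1},q)=1$.

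On the other side, Lemma~\ref{lm:rationalsolutions} applied to the principal-integral chain gives precisely the parallel reduction: $\delta(L_n,p)=1$ if and only if there is an antiderivative $q$ of $p$ with $\delta(L_{n-1},q)=1$. Now the induction hypothesis, which asserts $\delta(L_{n-1},\cdot)=\delta(LD^{n-1},\cdot)$ on all nonzero polynomials, matches the two reduced conditions, so
\begin{align*}
\delta(LD^n,p)=1
&\iff \exists\,q\ (q'=p):\ \delta(LD^{n-1},q)=1 \\
&\iff \exists\,q\ (q'=p):\ \delta(L_{n-1},q)=1 \\
&\iff \delta(L_n,p)=1,
\end{align*}
completing the step. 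The hypothesis $p\neq 0$ is what keeps the argument inside the range of the induction hypothesis: any antiderivative $q$ of a nonzero $p$ is non-constant, hence nonzero, so each descent again produces a nonzero polynomial.

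The step I expect to require the most care is aligning the existential quantifier over antiderivatives on the two sides. Both Lemma~\ref{lm:rationalsolutions} and the descent identity produce \emph{some} antiderivative $q$ rather than a fixed one, and the two constructions of $q$ differ (they are built from different solutions and differ by additive constants). The point to verify is that this does not matter: each statement is genuinely of the form ``$\delta$ at level $n$ equals $1$ iff $\delta$ at level $n-1$ equals $1$ at \emph{some} antiderivative,'' so once the induction hypothesis is quantified over all nonzero polynomials the two existential conditions literally coincide. I would also double-check that the Lemma is used only for principal integrals of the $L_i$-chain, while the separate descent identity is what handles $LD^n=(LD^{n-1})D$, since $LD^n$ need not be a principal integral of $LD^{n-1}$.
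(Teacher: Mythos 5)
Your proposal is correct and takes essentially the same route as the paper's own proof: induction on $n$, with Lemma~\ref{lm:rationalsolutions} providing the reduction along the principal-integral chain $L_n \to L_{n-1}$ and the adjoint identity $(MD)^* = -DM^*$ providing the parallel reduction $LD^n \to LD^{n-1}$, the two being matched by the induction hypothesis (your observation that antiderivatives of a nonzero $p$ are nonzero, and that both reductions are existential over antiderivatives, is exactly how the paper implicitly uses the lemma as well). The only difference is presentational: the paper proves the two implications separately with the adjoint computation done inline, while you package that computation as a standalone descent equivalence and then chain three biconditionals.
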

\begin{proof}
We shall prove the assertion by induction on $n$. The case $n=0$ is clear, because $L_0=L=LD^0$ in this case. Suppose that $n>0$ and the assertion holds for the case $n-1$.
Consider the case $n$. It suffices to show that $\delta(L_n,p)=1$ if and only if $\delta(LD^n,p)=1$.
Assume that $\delta(L_n,p)=1$. Lemma~\ref{lm:rationalsolutions} implies that $\delta(L_{n-1},q)=1$ for some $q\in C[x]$ with $p=q'$. By induction hypothesis, $\delta(LD^{n-1},q)=1$, i.e. there is an $r\in k$ such that $(LD^{n-1})^*(r)=q$. Since $(LD^n)^*=-D(LD^{n-1})^*$, one sees that
$(LD^n)^*(-r)=D((LD^{n-1})^*(r))=q'=p$, i.e. $\delta(LD^n,p)=1$. Conversely, assume that $\delta(LD^n,p)=1$. Then $\delta(LD^{n-1},\tilde{q})=1$ for some $\tilde{q}\in C[x]$ with $D(\tilde{q})=p$.  By induction hypothesis again, $\delta(L_{n-1},\tilde{q})=1$. By Lemma~\ref{lm:rationalsolutions} again, $\delta(L_n,p)=1$. Hence the assertion holds for the case $n$.
\end{proof}


\begin{cor}
\label{cor:criterion}
Let $L_i$ be as in Proposition~\ref{prop:expressionofLDn}. Then $\ord(L_{i+1})=\ord(L_i)$ if and only if $\delta(L,q)=1$ for some $q\in C[x]$ of degree $i$.
\end{cor}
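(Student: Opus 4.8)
The plan is to peel the order comparison off into a solvability question about the adjoint of $L$, routed through Proposition~\ref{prop:ratioalsolutions}. First I would record that the order of $L_{i+1}$ is completely controlled by $\delta(L_i)$: by Definition~\ref{defn:principalintegralofoperator}, when $\delta(L_i)=0$ one has $L_{i+1}=L_iD$, so $\ord(L_{i+1})=\ord(L_i)+1$, while when $\delta(L_i)=1$ one has $L_{i+1}=\frac{1}{l_i}(1-H_{l_i}D)$ with $\ord(H_{l_i})=\ord(L_i)-1$, whence $\ord(L_{i+1})=\ord(L_i)$. Thus $\ord(L_{i+1})=\ord(L_i)$ if and only if $\delta(L_i,1)=1$.

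Next I would apply Proposition~\ref{prop:ratioalsolutions} with $p=1$ and $n=i$ to replace $\delta(L_i,1)$ by $\delta(LD^i,1)$. The remaining and central task is then the equivalence $\delta(LD^i,1)=1$ if and only if $\delta(L,q)=1$ for some $q\in C[x]$ of degree $i$, which I would settle by a direct adjoint computation based on $(LD^i)^*=(-1)^iD^iL^*$.

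For the forward direction, a rational solution $y$ of $(LD^i)^*(y)=1$ satisfies $D^i(L^*(y))=(-1)^i$. Setting $q:=L^*(y)$, I would argue that $q$ is forced to be a polynomial of degree exactly $i$: its $i$-th derivative is a nonzero constant, so $q$ can have no pole, and then the degree is pinned down by $q^{(i)}\neq 0$. This exhibits a degree-$i$ polynomial $q$ with $\delta(L,q)=1$. For the converse, from $y_0\in k$ with $L^*(y_0)=q$ and $\deg q=i$, the constant $q^{(i)}=i!\,\lc(q)$ is nonzero, so $(LD^i)^*(y_0)=(-1)^iq^{(i)}$ is a nonzero constant; rescaling $y_0$ by a suitable element of $C$ and using $C$-linearity of the adjoint yields a rational solution of $(LD^i)^*(y)=1$, i.e.\ $\delta(LD^i,1)=1$.

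The only subtle point, and the one I expect to be the main obstacle, is the pole-free claim in the forward direction: that a rational function in $C(x)$ whose $i$-th derivative is a nonzero constant must be a polynomial of degree $i$. I would justify this through a partial-fraction argument, noting that any genuine pole persists, with strictly increasing multiplicity, under repeated differentiation, so that $D^i$ of a non-polynomial rational function can never be a constant. Once this is in hand, chaining the three equivalences gives the statement.
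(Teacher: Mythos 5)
Your proof is correct, and while it starts exactly as the paper does, its core is a genuinely different argument. Both you and the paper first read off from Definition~\ref{defn:principalintegralofoperator} that $\ord(L_{i+1})=\ord(L_i)$ if and only if $\delta(L_i,1)=1$. The paper then applies Lemma~\ref{lm:rationalsolutions} repeatedly, chaining through the intermediate principal integrals to produce a degree-$i$ polynomial $q$ with $\delta(L,q)=\delta(L_1,q')=\dots=\delta(L_i,i!\,\lc(q))=\delta(L_i)$, the right-hand sides being successive antiderivatives of a nonzero constant. You instead invoke Proposition~\ref{prop:ratioalsolutions} once (with $p=1$, $n=i$) to trade $\delta(L_i,1)$ for $\delta(LD^i,1)$, and then settle the equivalence of $\delta(LD^i,1)=1$ with the existence of a degree-$i$ polynomial $q$ satisfying $\delta(L,q)=1$ by a self-contained computation: the factorization $(LD^i)^*=(-1)^iD^iL^*$, the pole-persistence fact (a pole of order $m$ becomes a pole of order $m+i$ after $i$ differentiations, so a rational function whose $i$-th derivative is a nonzero constant is a polynomial of degree exactly $i$), and a rescaling by a constant for the converse. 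Since Proposition~\ref{prop:ratioalsolutions} is itself proved in the paper by iterating Lemma~\ref{lm:rationalsolutions}, the two proofs share foundations, but your packaging buys something concrete: the existential quantifier over $q$ is handled transparently, as your forward direction constructs $q=L^*(y)$ explicitly and your converse works for an \emph{arbitrary} degree-$i$ polynomial $q$. By contrast, the paper's chain fixes one particular $q$, and its closing phrase ``for such $q$'' strictly needs one more upward pass of Lemma~\ref{lm:rationalsolutions} starting from an arbitrary given $q$ to justify the ``if'' direction. The price you pay is the extra (though elementary) pole-persistence argument, which the paper's route never requires.
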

\begin{proof}
From Definition~\ref{defn:principalintegralofoperator}, $\ord(L_{i+1})=\ord(L_i)$ if and only if $\delta(L_i)=\delta(L_i,1)=1$. Note that $\delta(L_i)=\delta(L_i,c)$ for any nonzero $c\in C$. Using Lemma~\ref{lm:rationalsolutions} repeatedly, there is a $q\in C[x]$ of degree $i$ such that
$$
   \delta(L,q)=\delta(L_1, q')=\dots=\delta(L_i,i!\lc(q))=\delta(L_i).
$$
In other words, for such $q$, $\delta(L_i)=1$ if and only if $\delta(L,q)=1$. Hence $\ord(L_{i+1})=\ord(L_i)$ if and only if $\delta(L,q)=1$ for some $q\in C[x]$ of degree $i$.
\end{proof}
\begin{lem}
\label{lm:orderofprincipalintegras}
 {Let $f\in \cK$ and $L_i$ as in Proposition~\ref{prop:expressionofLDn} with $L=L_f$. Then $\ord(L_i)=\cN_i(f)$.}
\end{lem}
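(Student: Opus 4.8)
The plan is to prove the two inequalities $\cN_i(f)\le \ord(L_i)$ and $\cN_i(f)\ge \ord(L_i)$ separately. The first is immediate: by Lemma~\ref{lm:principalintegrals} there is an $i$-th integral $g_i$ of $f$ with $L_{g_i}=L_i$, so $\cN_i(f)\le \ord(g_i)=\ord(L_i)$. The whole difficulty lies in the lower bound, namely that \emph{no} $i$-th integral of $f$ has order strictly smaller than $\ord(L_i)$.

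I would prove the lower bound by induction on $i$. The base case $i=1$ is the Abramov--van Hoeij criterion recalled in Section~\ref{sec:orderoftheintegrals}, which I will use in the sharp form $\cN_1(h)=\ord(h)+1-\delta(L_h)$. For the inductive step, I would write every $i$-th integral of $f$ as an integral of an $(i-1)$-th integral $h$ of $f$. Since integration never lowers the minimal order, $\ord(h)\ge \cN_{i-1}(f)$, and a short comparison (if $h$ is not of minimal order, integrating a minimal $(i-1)$-th integral instead does at least as well) shows
\[
\cN_i(f)=\min\{\cN_1(h): h \text{ is an }(i-1)\text{-th integral of } f \text{ with } \ord(h)=\cN_{i-1}(f)\}.
\]
By the induction hypothesis $\cN_{i-1}(f)=\ord(L_{i-1})$, so $\cN_i(f)=\ord(L_{i-1})+1-\max_h\delta(L_h)$, the maximum over minimal $(i-1)$-th integrals $h$; while $\ord(L_i)=\ord(L_{i-1})+1-\delta(L_{i-1})$ by Definition~\ref{defn:principalintegralofoperator}. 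Hence the lemma reduces to the key claim that a minimal $(i-1)$-th integral $h$ satisfies $\delta(L_h)=1$ if and only if $\delta(L_{i-1})=1$.

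One implication is free. Since $L_h$ generates $\mathrm{Ann}(h)$ and $LD^{i-1}$ annihilates $h$, we have $LD^{i-1}=Q L_h$ for some $Q\in\DR$, whence $(LD^{i-1})^*=L_h^*Q^*$ and $\mathrm{im}\,(LD^{i-1})^*\subseteq \mathrm{im}\,L_h^*$. Thus $\delta(LD^{i-1},q)=1$ forces $\delta(L_h,q)=1$ for every $q\in C[x]$; taking $q=1$ and recalling $\delta(L_{i-1})=\delta(LD^{i-1})$ from Proposition~\ref{prop:ratioalsolutions} gives $\delta(L_{i-1})=1\Rightarrow\delta(L_h)=1$.

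The reverse implication is the main obstacle, and it is exactly here that the minimality of $h$ must enter: for a general $(i-1)$-th integral the inclusion $\mathrm{im}\,(LD^{i-1})^*\subseteq\mathrm{im}\,L_h^*$ is strict, so $\delta(L_h)=1\Rightarrow\delta(LD^{i-1})=1$ can fail. My plan is to show that for a \emph{minimal} $h$ the two sides agree on all polynomials, i.e. $\delta(L_h,q)=\delta(LD^{i-1},q)$ for every $q\in C[x]$; the case $q=1$ then closes the induction. The cleaner route I would attempt is to prove that a minimal $(i-1)$-th integral $h$ can be produced by the construction of Lemma~\ref{lm:principalintegrals} for a suitable choice of the rational solutions $l_0,\dots,l_{i-2}$, so that $L_h$ is itself a principal $(i-1)$-th integral of $L$; then $\delta(L_h,q)=\delta(LD^{i-1},q)$ is exactly Proposition~\ref{prop:ratioalsolutions}. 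The more hands-on alternative is to reverse the image inclusion directly, by showing that minimality forces the cofactor $Q$ (which then has maximal order) to satisfy $y\in\mathrm{im}\,Q^*+(\ker L_h^*\cap k)$ for every $y\in k$ with $L_h^*(y)\in C[x]$, so that a rational solution of $L_h^*(y)=q$ can be lifted through $Q^*$ to one of $(LD^{i-1})^*(Y)=q$. Establishing this reversal---equivalently, that passing to a minimal integral produces no genuinely new rational solutions of the adjoint beyond those already visible on $LD^{i-1}$---is the crux of the argument; everything else is the bookkeeping above together with Corollary~\ref{cor:criterion} and Proposition~\ref{prop:ratioalsolutions}.
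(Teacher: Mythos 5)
Your reduction is correct as far as it goes: the upper bound $\cN_i(f)\le\ord(L_i)$ via Lemma~\ref{lm:principalintegrals}, the restriction to minimal-order $(i-1)$-th integrals (any non-minimal $h$ has $\cN_1(h)\ge\ord(h)\ge\cN_{i-1}(f)+1\ge\ord(L_i)$, so it cannot violate the lower bound), and the implication $\delta(L_{i-1})=1\Rightarrow\delta(L_h)=1$ via the factorization $(L_fD^{i-1})^*=L_h^*Q^*$ are all fine. But note that this last implication only re-derives the upper bound you already have; the entire content of the lemma sits in the reverse implication, which you explicitly leave unproved, so there is a genuine gap, and neither of your proposed routes closes it. Route (a) --- that the annihilator of a minimal $(i-1)$-th integral of $f$ is itself an $(i-1)$-th principal integral of $L_f$ --- is exactly Proposition~\ref{prop:definitionsequivalence}(2), which the paper can only prove \emph{after} this lemma, using facts established inside its proof; assuming it here is circular, and proving it from scratch is essentially as hard as the lemma itself. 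Route (b) is a restatement of what must be shown, with no mechanism for exploiting the minimality of $h$.

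The missing idea, which is how the paper gets through, is to turn the problematic implication into a contradiction against realizability rather than prove it directly. The paper uses the opposite decomposition: an $i$-th principal integral $g$ of $f$ is viewed as an $(i-1)$-th integral of the \emph{first} integral $D^{i-1}(g)$, not as a first integral of the $(i-1)$-th integral $D(g)$. It then counts: setting $S_i(f)=\{0\le j\le i-1 \mid \delta(L_fD^j)=1\}$, Proposition~\ref{prop:ratioalsolutions} gives $\ord(L_i)=i+\ord(f)-|S_i(f)|$, and one proves $|S_i(f)|=i-\cN_i(f)+\ord(f)$ by induction. An inclusion of shifted index sets yields $|S_i(f)|\ge i-\cN_i(f)+\ord(f)$, and strict inequality --- as well as the possibility that $D^{i-1}(g)$ fails to be a principal integral of $f$ --- is ruled out because either would force $\ord(L_i)<\cN_i(f)$, contradicting Lemma~\ref{lm:principalintegrals}, which realizes $L_i$ as the annihilator of an actual $i$-th integral of $f$. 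This contradiction-via-realizability is exactly the device your sketch lacks; without it (or an independent proof of Proposition~\ref{prop:definitionsequivalence}(2)), your argument stops at the reduction and does not prove the lemma.
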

\begin{proof}
 {Set
$$
   S_i(f)=\{ 0\leq j \leq i-1 \mid \delta(L_f D^j)=1\}.
$$
Then by Proposition~\ref{prop:ratioalsolutions}, $\ord(L_i)=i+\ord(f)-|S_i(f)|$. Therefore,
it suffices to show that $|S_i(f)|= i-\cN_i(f)+\ord(f)$.
We shall prove this by induction on $i$. The case $i=1$ follows from the fact that $\cN_1(f)=\ord(f)$ if and only if $\delta(L_f)=1$. Suppose that $i>1$ and the assertion holds true for the case $i-1$. Let $g$ be an $i$-th principal integral of $f$. Then any $(i-1)$-th integral $h$ of $D^{i-1}(g)$ is an $i$-th integral of $f$. Hence $\ord(h)\geq \ord(g)$ and so $g$ is an $(i-1)$-th principal integer of $D^{i-1}(g)$. In other words, $\cN_{i-1}(D^{i-1}(g))=\cN_i(f)$.
By induction hypothesis,
$$
   S_{i-1}(D^{i-1}(g))=i-1-\cN_i(f)+\ord(D^{i-1}(g)).
$$
Assume that $D^{i-1}(g)$ is a principal integral of $f$.
Set $\tilde{L}_1=L_{D^{i-1}(g)}$. Let $\tilde{L}_{j+1}$ be a principal integral of $\tilde{L}_j$ for $1\leq j \leq i-1$. As $\tilde{L}_1$ is a principal integral of $L_f$, $\tilde{L}_j$ is a $j$-th principal integral of $L_f$.
By Proposition~\ref{prop:ratioalsolutions} $\delta(\tilde{L}_1D^j)=1$ if and only if $\delta(\tilde{L}_{j+1})=1$, and so if and only if $\delta(L_f D^{j+1})=1$. Thus
  $
    \{s+1 \mid s\in S_{i-1}(D^{i-1}(g))\}\subset S_i(f).
 $ On the other hand, one has that $0\in S_i(f)$ if and only if $\ord(D^{i-1}(g))=\ord(f)$. Hence $|S_i(f)|$ is not less than
\begin{align*}
 &|\{s+1 \mid s\in S_{i-1}(D^{i-1}(g))\}|+1+\ord(f)-\ord(D^{i-1}(g))\\
 &\geq i-1-\cN_i(f)+\ord(D^{i-1}(g))+1+\ord(f)-\ord(D^{i-1}(g))\\
 &=i-\cN_i(f)+\ord(f).
\end{align*}
If $|S_i(f)|>i-\cN_i(f)+\ord(f)$ then $\ord(L_i)<\cN_i(f)$. By Lemma~\ref{lm:principalintegrals} there is an $i$-th integral $\tilde{g}$ of $f$ such that $L_{\tilde{g}}=L_i$. So $\ord(\tilde{g})<\cN_i(f)$, a contradiction. Hence $|S_i(f)|=i-\cN_i(f)+\ord(f)$.
It remains to show that $D^{i-1}(g)$ is a principal integral of $f$.
Assume on the contrary that $D^{i-1}(g)$ is not a principal integral of $f$. Then $L_{D^{i-1}(g)}=L_fD$ and $\cN_1(f)=\ord(f)$. These imply that $\delta(L_fD^{s+1})=1$ for all $s\in S_{i-1}(D^{i-1}(g))$ and $\delta(L_f)=1$. These imply that
\begin{align*}
  \ord(L_i)&\leq i+\ord(f)-|S_{i-1}(D^{i-1}(g))|-1\\
  &\leq \cN_i(f)-\ord(D^{i-1}(g))+\ord(f)=\cN_i(f)-1.
\end{align*}
Using a similar argument as before, one will obtain a contradiction. Thus $D^{i-1}(g)$ must be a principal integral of $f$.  }
\end{proof}
 {The following proposition shows that Definition~\ref{def:principalintegraloffunctions} and Definition~\ref{defn:principalintegralofoperator} are consistent.}
\begin{prop}
\label{prop:definitionsequivalence}
 {
Suppose that $f\in \cK$ and $L_i$ is an $i$-th principal integral of $L_f$.
\begin{enumerate}
\item
Suppose that $g$ is an $i$-th integral of $f$ annihilated by $L_i$. Then $g$ is an $i$-th principal integral of $f$.
\item
Suppose that $g$ is an $i$-th principal integral of $f$. Then $L_g$ is an $i$-th principal integral of $L_f$.
\end{enumerate}}
\end{prop}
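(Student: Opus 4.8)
The plan is to treat the two parts separately, deriving (1) at once from the order formula and reducing (2) to the case $i=1$ by induction. For (1), suppose $g$ is an $i$-th integral of $f$ annihilated by $L_i$. Then $L_i\in\text{Ann}_\DR(g)=\DR L_g$, so $\ord(g)=\ord(L_g)\le\ord(L_i)$, and by Lemma~\ref{lm:orderofprincipalintegras} we have $\ord(L_i)=\cN_i(f)$. Hence $\ord(g)\le\cN_i(f)$. Conversely, $g$ being an $i$-th integral of $f$ forces $\ord(g)\ge\cN_i(f)$ by the very definition of $\cN_i$. Thus $\ord(g)=\cN_i(f)$, so $g$ is a principal $i$-th integral of $f$, and (1) is done.

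For (2) I would induct on $i$. The proof of Lemma~\ref{lm:orderofprincipalintegras} already establishes that if $g$ is a principal $i$-th integral of $f$, then $h:=D^{i-1}(g)$ is a (first) principal integral of $f$ and $g$ is a principal $(i-1)$-th integral of $h$. Granting the base case $i=1$, it gives that $L_h$ is a principal integral of $L_f$; applying the induction hypothesis to $h$ in place of $f$ gives that $L_g$ is an $(i-1)$-th principal integral of $L_h$. Concatenating the one-step chain $L_f\to L_h$ with the $(i-1)$-step chain from $L_h$ to $L_g$ exhibits $L_g$ as an $i$-th principal integral of $L_f$, which is precisely the iterated definition. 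So everything reduces to $i=1$.

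The case $i=1$ is the heart of the matter. When $\delta(L_f)=0$ it is immediate: then $\cN_1(f)=\ord(f)+1$ and $L_g=L_fD$ for every integral $g$, and $L_fD$ is by definition the principal integral of $L_f$. The substantive case is $\delta(L_f)=1$, where $\ord(g)=\ord(f)=n$ and I must produce a rational solution $l$ of $L_f^*(y)=1$ with $L_g=\frac1l(1-H_lD)$, i.e. show $g=H_l(f)$ for some such $l$. My plan is: since $f=D(g)\in\DR g$ we get $\DR f\subseteq\DR g$, and as both are cyclic left $\DR$-modules of $k$-dimension $n$ they coincide; hence $g\in\DR f$, so $g=P(f)$ with $P\in\DR$, which I may take of order $\le n-1$ after reducing modulo $L_f$. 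From $D(g)=f$ one has $(DP-1)(f)=0$, so $DP-1=RL_f$ for some $R\in\DR$; comparing orders (using $\ord(P)\le n-1$) forces $R\in k$. Taking adjoints in $DP-1=RL_f$ and evaluating at the constant $1$ gives $L_f^*(-R)=1$, so $l:=-R$ is the desired rational solution; then $DP=1-lL_f=DH_l$ forces $P=H_l$ because $\DR$ is a domain. Therefore $g=H_l(f)$ and $L_g=\frac1l(1-H_lD)$ is the principal integral of $L_f$ with respect to $l$.

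The step I expect to be most delicate is exactly this identification of the correct rational solution $l$ when $\delta(L_f)=1$: a priori an order-$n$ integral need not be \emph{literally} of the form $H_l(f)$, and what makes it work is the choice of the minimal-order representative $P$, after which the order bookkeeping pins $R$ down to a scalar in $k$ and the self-adjointness of multiplication operators yields $l=-R$ with no stray additive constant to absorb. I would verify the degenerate cases $f=0$ and $g=0$ separately, though they present no difficulty.
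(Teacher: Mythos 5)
Your proposal is correct, and its overall architecture is the paper's: part (1) is exactly the paper's argument (annihilation by $L_i$ gives $\ord(g)\le \ord(L_i)=\cN_i(f)$ via Lemma~\ref{lm:orderofprincipalintegras}, while the definition of $\cN_i$ gives the reverse inequality), and part (2) runs the same induction on $i$, using the same decomposition extracted from the proof of Lemma~\ref{lm:orderofprincipalintegras} (namely that $h=D^{i-1}(g)$ is a principal integral of $f$ and $g$ is a principal $(i-1)$-th integral of $h$) and concatenating the one-step chain with the $(i-1)$-step chain. The one place you genuinely depart from the paper is the base case $i=1$: the paper dispatches it by citing the discussion after Proposition 4 of Abramov and van Hoeij (1999), whereas you prove it from scratch, and your proof is sound. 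When $\delta(L_f)=1$ and $\ord(g)=\ord(f)=n$, the inclusion $\DR f\subseteq \DR g$ of $k$-spaces of equal dimension $n$ gives $g=P(f)$ with $\ord(P)\le n-1$; the order count in $DP-1=RL_f$ forces $R\in k\setminus\{0\}$; taking adjoints and applying both sides to the constant $1$ gives $L_f^*(-R)=1$; and since $\DR$ is a domain, $DP=1-lL_f=DH_l$ with $l=-R$ yields $P=H_l$, so that $L_g=\frac{1}{l}(1-H_lD)$ is a principal integral of $L_f$ in the sense of Definition~\ref{defn:principalintegralofoperator}. What your route buys is self-containedness: the Abramov--van Hoeij fact underlying the whole construction is rederived rather than quoted, at the cost of some module-theoretic bookkeeping. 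Two small points to make explicit: you need $f\neq 0$ (hence $P\neq 0$) before comparing orders, as you noted; and the final identification $L_g=\frac{1}{l}(1-H_lD)$ uses that this operator is monic of order $n=\ord(g)$ and annihilates $g$, which is what makes it the minimal (defining) operator.
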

\begin{proof}
 {(1). By Lemma~\ref{lm:principalintegrals}, $\ord(g)\leq \ord(L_i)=\cN_i(f)$. So $g$ must be an $i$-th principal integral of $f$.}

 {(2). We shall show the assertion by induction on $i$. The case $i=1$ follows from the discussion after Proposition 4 of \cite{AbramovHoeij1999}. Suppose that $i>1$ and the assertion holds true for the case $i-1$. The proof of Lemma~\ref{lm:principalintegrals} implies that $g$ is an $(i-1)$-th principal integral of $D^{i-1}(g)$ and $D^{i-1}(g)$ is a principal integral of $f$. By induction hypothesis, $L_g$ is an $(i-1)$-th principal integral of $L_{D^{i-1}(g)}$ and $L_{D^{i-1}(g)}$ is a principal integral of $L_f$. Hence $L_g$ is an $i$-th principal integral of $L_f$.}
\end{proof}
Given a nonzero $P\in \DR$, there exist a nonzero polynomial $\ind^P(s)\in C[s]$ and an integer $\sigma^P$ such that for any $s\in \bZ$,
$$
 P(x^s)=\ind^P(s)x^{s+\sigma^P}(1+c_1x^{-1}+c_2x^{-2}+\dots)
$$
where $c_i\in C$ (see~\cite[p. 102]{vanderput-singer}).
The polynomial $\ind^P$ is usually called the indicial polynomial of $P$ at $\infty$.
Remark that $Pd$ and $P$ have the same indicial polynomial and $\sigma^{Pd}=\sigma^P+\deg(d)$, where $d$ is a nonzero monic polynomial. In the following, $V_{\bZ_{\geq 0}}(\ind^P)$ stands for the set of nonnegative integer solutions of $\ind^P(s)=0$ and we agree  {that} $\max \emptyset =-1$.
\begin{thm}
\label{thm:bound}
Let $d$ be a nonzero monic polynomial of minimal degree such that $dL$ is an operator of polynomial coefficients. Set
\begin{equation}
\label{eqn:bound}
\cB(L)=\max\{0,\max V_{\bZ_{\geq 0}}(\ind^{L^*})+1+\sigma^{L^*}+\deg(d)\}.
\end{equation}
Then for all $i\geq \cB(L)$, there exists a polynomial $p$ of degree $i$ such that $\delta(L,p)=1$.
\end{thm}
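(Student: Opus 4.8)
The plan is to reduce everything to an operator with polynomial coefficients and then to produce polynomials of prescribed degree simply by applying its adjoint to monomials. Set $\tilde{L}=dL$, which by the minimality choice of $d$ has polynomial coefficients. Since the adjoint reverses products and fixes a multiplication operator ($d^*=d$), we have $\tilde{L}^*=L^*d$, so for every $y\in C(x)$ the key identity $\tilde{L}^*(y)=L^*(dy)$ holds. By the remark preceding the statement, applied with $P=L^*$, we also get $\ind^{\tilde{L}^*}=\ind^{L^*}$ and $\sigma^{\tilde{L}^*}=\sigma^{L^*}+\deg(d)$. The upshot of the identity is that to witness $\delta(L,p)=1$ it suffices to exhibit $p=\tilde{L}^*(x^s)$ for some $s$, since then $y=dx^s\in C(x)$ solves $L^*(y)=p$.

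Next I would analyze $\tilde{L}^*$ on a monomial $x^s$ for a nonnegative integer $s$. Writing $\tilde{L}^*=\sum_{j=0}^{n}b_jD^j$ with $b_j\in C[x]$, I get $\tilde{L}^*(x^s)=\sum_{j=0}^{n}b_j(x)\,s^{\underline{j}}\,x^{s-j}$, where the falling factorial $s^{\underline{j}}=s(s-1)\cdots(s-j+1)$ vanishes whenever $j>s$. Hence every surviving term carries a nonnegative power of $x$, so $\tilde{L}^*(x^s)$ is a genuine polynomial (no negative powers appear). Comparing this with the defining expansion $\tilde{L}^*(x^s)=\ind^{\tilde{L}^*}(s)\,x^{s+\sigma^{\tilde{L}^*}}(1+c_1x^{-1}+\cdots)$, I conclude that whenever $\ind^{\tilde{L}^*}(s)\neq 0$ this polynomial has degree exactly $s+\sigma^{\tilde{L}^*}$.

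Finally, given an integer $i\geq\cB(L)$, I would set $s=i-\sigma^{\tilde{L}^*}=i-\sigma^{L^*}-\deg(d)$. Because $\cB(L)\geq \max V_{\bZ_{\geq 0}}(\ind^{L^*})+1+\sigma^{L^*}+\deg(d)$, one checks directly that $s\geq \max V_{\bZ_{\geq 0}}(\ind^{L^*})+1\geq 0$, so $s$ is a nonnegative integer strictly larger than every nonnegative integer root of $\ind^{L^*}=\ind^{\tilde{L}^*}$; therefore $\ind^{\tilde{L}^*}(s)\neq 0$. By the previous step, $p:=\tilde{L}^*(x^s)=L^*(dx^s)$ is a polynomial of degree $s+\sigma^{\tilde{L}^*}=i$, and $dx^s\in C(x)$ solves $L^*(y)=p$, giving $\delta(L,p)=1$ with $\deg(p)=i$, as required.

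I expect no serious obstacle here: once the two reductions (passing to $\tilde{L}^*=L^*d$ and tracking $\sigma^{\tilde{L}^*}=\sigma^{L^*}+\deg(d)$, $\ind^{\tilde{L}^*}=\ind^{L^*}$) are in place, the argument is essentially bookkeeping. The only genuinely delicate points are verifying that the monomial image is a true polynomial — which is exactly where the vanishing of $s^{\underline{j}}$ for $j>s$ is used — and that its degree is pinned down precisely, which relies on $s$ avoiding the nonnegative integer roots of the indicial polynomial. Keeping the shift by $\deg(d)$ consistent throughout the degree bound is the main place where a careless sign or offset could creep in.
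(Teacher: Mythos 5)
Your proof is correct and follows essentially the same route as the paper: pass to $\tilde{L}=dL$ with polynomial coefficients, apply $(dL)^*=L^*d$ to the monomial $x^{s}$ with $s=i-\sigma^{(dL)^*}$, and use $s\geq \max V_{\bZ_{\geq 0}}(\ind^{L^*})+1$ to guarantee $\ind^{(dL)^*}(s)\neq 0$, so that the image is a polynomial of degree exactly $i$ witnessed by the rational solution $dx^{s}$. Your falling-factorial justification that $\tilde{L}^*(x^s)$ is a genuine polynomial is just a more explicit spelling-out of a step the paper asserts directly.
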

\begin{proof}
It is clear that $\delta(dL,p)=\delta(L,p)$ for any nonzero polynomial $p$. So it suffices to consider $dL$. Note that $\ind^{(dL)^*}=\ind^{L^*}$ and $\sigma^{(dL)^*}=\sigma^{L^*}+\deg(d)$. Suppose that $i\geq \cB(L)$. Set $\ell=i-\sigma^{(dL)^*}$. Then $\ell\geq 0$ and $(dL)^*(x^\ell)$ is a polynomial. Furthermore,
\begin{align*}
   (dL)^*(x^\ell)&=\ind^P(\ell)x^{\ell+\sigma^{(dL)^*}}+\mbox{lower terms}\\
   &=\ind^P(\ell)x^i+\mbox{lower terms}.
\end{align*}
Since $\ell\geq \cB(L)-\sigma^{(dL)^*}\geq \max V_{\bZ_{\geq 0}}(\ind^P)+1$, $\ind^{(dL)^*}(\ell)\neq 0$ and thus $\deg((dL)^*(x^\ell))=i$. Set $p=(dL)^*(x^\ell)$. Then $p$ is a polynomial as required.
\end{proof}

In the case of D-finite power series, it has been proved in~\cite[Theorem 4.5]{chen2022ISSAC} that the order of $i$-th principal integrals of a given D-finite function
is uniformly bounded.  We now provide a more explicit order bound in terms of the information of indicial polynomials.

\begin{defn}
\label{def:stabilityindex}
The stability index of $L$, denoted by $\sind(L)$, is defined to be the minimal nonnegative integer $m$ satisfying that for each $i\geq m$ there exists a polynomial $p_i$ of degree $i$ such that $\delta(L,p_i)=1$. For an $f\in \cK$, the stability index of $f$ is defined to be the stability index of $L_f$, also denoted by $\sind(f)$.
\end{defn}

\begin{remark}
\label{rem:stabilityindex}
It is clear that $\sind(L)\leq \cB(L)$.
\end{remark}
\begin{exam}
\label{exam:notpolynomialsolution}
 {Let $L=D^3-xD$ which is the defining operator of a principal integral of Airy function. One has that $\cB(L)=1$. However $L$ is stable, i.e. $\sind(L)=0$. In fact, one has that for every $i\geq 1$,
$$
   L_i=D^3-xD+i
$$
is an $i$-th principal integral of $L$. Hence the bound in Remark~\ref{rem:stabilityindex} is not tight.
}
\end{exam}
\begin{defn}
\label{def:stability}
A nonzero operator $L\in \DR$ is said to be stable under integration, or simply stable, if \ $\sind(L)=0$. An $f\in \cK$ is said to be stable if \ $\sind(f)=0$.
\end{defn}
\begin{remark}
$L$ is a nonzero operator and $L_i$ is an $i$-th principal integral of $L$ for all $i\geq 0$. Then Corollary~\ref{cor:criterion} implies that $\ord(L_i)=\ord(L_{\sind(L)})$ for all $i\geq \sind(L)$.
\end{remark}

Below are several examples of stable operators.
\begin{exam}
\label{exam:onepole}
Suppose that $\beta\in C$ is not a positive integer and $L=D+\beta/(x-c)$ for some $c\in C$. Then for each $s\geq 1$,
$$
   L^*((x-c)^s)=-s(x-c)^{s-1}+\beta(x-c)^{s-1}=(\beta-s)x^{s-1}.
$$
Since $\beta$ is not a positive integer, $\beta-s\neq 0$ for any $s\geq 1$. Hence $\deg(L^*((x-c)^s))=s-1$ for all $s\geq 1$ and then $\delta(L,(\beta-s)(x-c)^{s-1})=1$ for all $s\geq 1$. So $\sind(L)=0$, i.e. $L$ is stable.
\end{exam}

\begin{exam}
\label{exam:constantcoefficients}
Suppose that $L=p(D)$, where $p\in C[z]$ is a polynomial of positive degree. Let $s$ be the maximal integer such that $z^s$ divides $p$. Then for each $i\geq s$,
$$
   L^*(x^s)=(-1)^sc_s{i \choose s} x^{i-s}+\mbox{lower terms}
$$
where $c_s$ is the trailing coefficient of $p$. Thus $\sind(L)=0$, i.e. $L$ is stable.
\end{exam}

\begin{exam}
Suppose $\alpha,\beta,\gamma\in C$. Consider  {the} hypergeometric differential operator
$$
    L=D^2+\frac{(\alpha+\beta+1)x-\gamma}{x(x-1)}D+\frac{\alpha\beta}{x(x-1)}.
$$
For each $s\geq 0$, an easy calculation yields that
$$
L^*(x^{s+1}(x-1))=(s+1-\alpha)(s+1-\beta)x^s-s(s+1-\gamma)x^{s-1}.
$$
Suppose that $\alpha-1\notin \bZ_{\geq 0}$ and $\beta-1\notin \bZ_{\geq 0}$. Then $\deg((L^*(x^{s+1}(x-1)))=s$. Hence $L$ is stable if neither $\alpha-1$ nor $\beta-1$ is a nonnegative integer.
\end{exam}

\section{Stability Indices of special operators} \label{sec:stability}
For a general operator {,} even for a first order operator, it is difficult to compute its stability index. In this section, we shall study the stability indices of some special operators.
\begin{prop}
\label{prop:propertyofstable}
 {The set $P_L=\{p\in C[x]| \delta(L,p)=1\}$ is a $C$-vector space for arbitrary nonzero operator $L$.
 Further if $L$ is stable, then $P_L=C[x]$.}
\end{prop}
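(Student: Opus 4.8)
The plan is to treat the two assertions separately, using the vector-space structure from the first as the engine for the second.

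For the first assertion, I would observe that, directly from the definition of $\delta$, a polynomial $p$ lies in $P_L$ precisely when $L^*(y)=p$ is solvable in $k=C(x)$, i.e.\ when $p$ belongs to the image of the map $L^*\colon k\to k$. In other words, $P_L=\im(L^*)\cap C[x]$, where $L^*\in\DR$ is regarded as acting on $k$. Since the derivation kills $C$, the action of $L^*$ is $C$-linear, so $\im(L^*)$ is a $C$-subspace of $k$; as $C[x]$ is also a $C$-subspace of $k$, the intersection $P_L$ is a $C$-subspace of $C[x]$. Concretely, if $p_1,p_2\in P_L$ with $L^*(y_j)=p_j$ and $c_1,c_2\in C$, then $L^*(c_1y_1+c_2y_2)=c_1p_1+c_2p_2$ with $c_1y_1+c_2y_2\in k$, so $c_1p_1+c_2p_2\in P_L$; together with $0=L^*(0)\in P_L$ this yields the subspace property.

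For the second assertion, I would feed stability into this structure. If $L$ is stable then $\sind(L)=0$, so by Definition~\ref{def:stabilityindex} for every $i\geq 0$ there is a polynomial $p_i\in P_L$ of degree exactly $i$. The elementary point is that, for each fixed $n$, the polynomials $p_0,p_1,\dots,p_n$ have pairwise distinct degrees and hence form a $C$-basis of the space $C[x]_{\leq n}$ of polynomials of degree at most $n$ (the change-of-basis matrix to the monomials is triangular with nonzero diagonal). Thus every polynomial of degree at most $n$ is a $C$-linear combination of elements of $P_L$, and since $P_L$ is a $C$-vector space it contains that combination. Letting $n\to\infty$ gives $P_L=C[x]$.

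I do not expect a genuine obstacle here; the proposition is light, and its real content is the interplay between the two parts. The one step deserving care is the identification $P_L=\im(L^*)\cap C[x]$, which makes the $C$-linearity of $L^*$ transfer cleanly to $P_L$, and the verification that stability indeed supplies a polynomial of \emph{every} degree, including degree $0$ (a nonzero constant), so that the triangular-basis argument reaches all of $C[x]$.
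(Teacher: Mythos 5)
Your proof is correct and follows essentially the same route as the paper's: the $C$-linearity of $L^*$ gives the subspace property, and the polynomials of every degree supplied by stability form a triangular basis of $C[x]$, so linearity covers all of $C[x]$. The paper writes the final step by explicitly exhibiting the solution $\sum_i c_i r_i$ of $L^*(y)=p$, which is the same linearity argument you invoke via the vector-space structure.
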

\begin{proof}
 {Suppose that $p_1,p_2\in P_L$ and $a_1,a_2\in C$. Then there are $l_1,l_2\in C(x)$ such that $L^*(l_1) = p_1$ and $L^*(l_2) = p_2$. Then $L^*(a_1l_1+a_2l_2)=a_1p_1+a_2p_2$. Hence $a_1p_1+a_2p_2\in P_L$ and so $P_L$ is a $C$-vector space.}

 {Suppose that $L$ is stable. Then for each $i\geq 0$, there is a nonzero $h_i\in C[x]$ of degree $i$ such that $\delta(L,h_i)=1$. Let $r_i\in C(x)$ be a solution of $L^*(y)=h_i$. Note that $\{h_i \mid i\geq 0\}$ is a basis of $C[x]$ as a vector space over $C$. Assume that $p\in C[x]$. Write $p=\sum_{i=0}^\ell c_i h_i$ where $\ell=\deg(p)$. Then $\sum_{i=0}^\ell c_i r_i$ is a solution of $L^*(y)=p$ in $C(x)$.}
\end{proof}

\subsection{Stability indices of Katz's operators}
In~\cite{Katz1987}, Katz introduced a class of operators of the following form:
\begin{equation}
\label{eqn:Katz}
    p(D)+q
\end{equation}
where $p,q\in C[z]\setminus C, p(0)=0$ and $\gcd(\deg(p), \deg(q))=1$.
He proved that differential Galois groups of the operators of the above form are large. In this subsection, let us compute the stability indices of Katz's operators.
\begin{lem}
\label{lm:stabilityindex}
Suppose that $L$ is of the form
$$
  D^n+\frac{a_{n-1}}{a_n}D^{n-1}+\dots+\frac{a_0}{a_n},\,\, a_i\in C[x],\,\, a_n\neq 0.
$$
Suppose further that \  {$\gcd(a_0\ ,a_1\ ,\dots\ ,a_n)=1$} and  {$\deg(a_0)>\deg(a_i)-i$} for all $i=1,2,\dots,n$. Then $\sind(L)\leq \deg(a_0)$.
\end{lem}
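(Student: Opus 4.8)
The plan is to reduce the problem to an operator with polynomial coefficients and then exhibit explicit right-hand sides of the form $L^*(x^s)$. I would first recall from the proof of Theorem~\ref{thm:bound} that $\delta(dL,p)=\delta(L,p)$ for any nonzero $d$ clearing denominators; taking $d=a_n$, it then suffices to work with the primitive operator $\tilde L := a_n L = \sum_{i=0}^{n} a_i D^i$, whose adjoint is $\tilde L^* = \sum_{i=0}^{n}(-1)^i D^i a_i$ (so that $\tilde L^*(y)=\sum_i (-1)^i (a_i y)^{(i)}$). The goal thus becomes: for every integer $i\geq \deg(a_0)$, produce a polynomial $p$ of degree $i$ such that $\tilde L^*(y)=p$ has a solution in $C(x)$. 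Note that $\delta(\tilde L,p)=\delta(L,p)$, so a bound for $\tilde L$ transfers to $L$.

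The key step is a degree computation for the test functions $x^s$, $s\geq 0$. Applying $\tilde L^*$ to $x^s$, the $i$-th summand $(-1)^i (a_i x^s)^{(i)}$ is a polynomial of degree at most $\deg(a_i)+s-i$. The hypothesis $\deg(a_0)>\deg(a_i)-i$ for $1\leq i\leq n$ is precisely what forces
$$\deg\bigl((a_i x^s)^{(i)}\bigr)\leq \deg(a_i)+s-i < \deg(a_0)+s \qquad (1\leq i\leq n),$$
while the zeroth summand $a_0 x^s$ has degree exactly $\deg(a_0)+s$ with leading coefficient $\lc(a_0)\neq 0$. (Here $a_0\neq 0$, since otherwise $\deg(a_0)=-\infty$ and the case $i=n$ of the hypothesis, $\deg(a_0)>\deg(a_n)-n$, would fail as $a_n\neq 0$.) Hence no higher-order summand can reach the top degree, and $\tilde L^*(x^s)$ is a polynomial of degree exactly $\deg(a_0)+s$ for every $s\geq 0$.

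To conclude, I would fix $i\geq \deg(a_0)$ and set $s=i-\deg(a_0)\geq 0$. Then $p:=\tilde L^*(x^s)$ has degree $i$ and, by construction, $x^s\in C(x)$ solves $\tilde L^*(y)=p$, so $\delta(L,p)=\delta(\tilde L,p)=1$. As $i$ ranges over all integers $\geq \deg(a_0)$, the parameter $s$ ranges over all nonnegative integers, supplying a degree-$i$ right-hand side for each such $i$; therefore $\sind(L)\leq \deg(a_0)$ by Definition~\ref{def:stabilityindex}.

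The only delicate point is the degree/leading-coefficient bookkeeping in the second paragraph, namely verifying that the stated inequalities rule out any competing top-degree contribution from the higher-order summands. This is routine once the test functions $x^s$ are chosen, and I expect no genuine obstacle. I would also note that the direct construction here is preferable to invoking Theorem~\ref{thm:bound} with $d=a_n$, since the resulting $\cB(L)$ involves $\deg(a_n)$ and would generally yield a weaker bound; moreover the coprimality hypothesis $\gcd(a_0,\dots,a_n)=1$ appears not to be needed for this upper bound and presumably serves a sharper, later statement, so the argument relies only on the degree inequalities.
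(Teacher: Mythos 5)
Your proposal is correct and is essentially the paper's own proof: the paper simply evaluates $L^*(a_n x^s)$, which is exactly your $\tilde L^*(x^s)$, and asserts (leaving your degree bookkeeping as ``a simple calculation'') that it is a polynomial of degree $\deg(a_0)+s$, whence $\sind(L)\leq\deg(a_0)$. Your side observations — that the hypothesis forces $a_0\neq 0$ and that the gcd condition is not needed for this bound (it matters only for the comparison with $\cB(L)$ in Remark~\ref{rem:boundcompare}) — are accurate.
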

\begin{proof}
For each $s\geq 0$, a simple calculation yields that $L^*(a_nx^s)$ is a polynomial of degree $\deg(a_0)+s$. Thus $\sind(L)\leq \deg(a_0)$.
\end{proof}
\begin{remark}
\label{rem:boundcompare}
 {The bound given in \ref{lm:stabilityindex} is better than the bound given in Theorem~\ref{thm:bound}. In fact, one has that $\ind^{L^*}(s)$ is a nonzero constant and $\sigma^{L^*}=\deg(a_0)$. Thus $\cB(L)>\deg(a_0)$. }
\end{remark}
\begin{prop}
\label{prop:katz}
Suppose that $L=p(D)+q(x)$, where $p,q\in C[z]$ and $p(0)=0$. Then $\sind(L)=\max\{\deg(q),0\}$.
\end{prop}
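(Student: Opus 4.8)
The plan is to describe the space $P_L=\{p\in C[x]\mid \delta(L,p)=1\}$ exactly and then read off the smallest degree past which every degree occurs. Write $p(D)=\sum_{j=1}^{n}c_jD^j$ with $c_n\neq 0$ and $n=\deg p\geq 1$ (the case $p=0$ being degenerate). If $q=0$ then $L=p(D)$ is stable by Example~\ref{exam:constantcoefficients}, matching $\max\{\deg q,0\}=0$, so I would assume $q\neq 0$ and set $d=\deg q$. Since $p(0)=0$, taking adjoints gives $L^{*}=p(-D)+q(x)$, again an operator whose differential part has zero constant term together with multiplication by the degree-$d$ polynomial $q$. The goal is then to prove that the nonzero elements of $P_L$ are exactly the polynomials of degrees $d,d+1,d+2,\dots$, which forces $\sind(L)=d$.

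The crucial step is to show that any \emph{rational} solution of $L^{*}(y)=p$ with $p\in C[x]$ is in fact a \emph{polynomial}. I would argue by poles at a finite point. If $r\in C(x)$ had a pole of order $k\geq 1$ at some $a\in C$, then $r^{(j)}$ has a pole of order exactly $k+j$ there with nonzero leading Laurent coefficient, so in $L^{*}(r)=\sum_{j=1}^{n}(-1)^{j}c_j r^{(j)}+q(x)r$ the top term $(-1)^{n}c_n r^{(n)}$ has the strictly largest pole order $k+n$: the remaining differential terms have order at most $k+n-1$, and $q(x)r$ has order at most $k$ because $q$ is a polynomial. This pole cannot cancel, contradicting that $L^{*}(r)=p$ is a polynomial. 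Hence $r\in C[x]$, and therefore $P_L=L^{*}(C[x])$.

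With the inputs restricted to polynomials the degree count is routine. For a nonzero $y\in C[x]$ of degree $e$, the term $q(x)y$ has degree $d+e$ while $p(-D)(y)$ has degree at most $e-1<d+e$, so there is no cancellation at top degree and $\deg L^{*}(y)=d+e$. Thus every degree $i\geq d$ is attained by $y=x^{\,i-d}$, giving $\sind(L)\leq d$ (an inequality that also follows from Lemma~\ref{lm:stabilityindex}, since after making $L$ monic all coefficients are already polynomials, the zeroth being $q/c_n$ of degree $d$ and the higher ones constant), whereas no nonzero element of $P_L$ has degree below $d$. In particular degree $d-1$ is missing, so the threshold $m=d-1$ fails the defining property of $\sind$ in Definition~\ref{def:stabilityindex}; since that property is upward closed in $m$, combining with the upper bound yields $\sind(L)=d=\max\{\deg q,0\}$, and the remaining cases $q=0$ and $\deg q=0$ agree with this formula.

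The main obstacle is precisely the pole argument forcing rational solutions to be polynomials; once that is in hand, everything reduces to comparing leading terms. The point to watch is that the top derivative genuinely dominates at a finite pole, i.e.\ that $q(x)r$ and the lower-order derivatives cannot conspire to kill the order-$(k+n)$ pole of $r^{(n)}$, which is exactly where $n=\deg p\geq 1$ (so differentiation strictly raises pole order) is used; the hypothesis $p(0)=0$ then pins the zeroth coefficient of $L^{*}$ to be $q$, hence the precise value $d=\deg q$.
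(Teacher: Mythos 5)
Your proof is correct and takes essentially the same route as the paper's: the case $q=0$ via Example~\ref{exam:constantcoefficients}, the upper bound from the degree count $\deg(L^*(y))=\deg(q)+\deg(y)$ for polynomial $y$ (which is exactly the content of Lemma~\ref{lm:stabilityindex}), and the lower bound from the pole-order argument forcing any rational solution of $L^*(y)=p$ to be a polynomial---a step the paper asserts in one sentence and you usefully spell out. One caution: your phrase ``the nonzero elements of $P_L$ are exactly the polynomials of degrees $d,d+1,\dots$'' should be read as a statement about which \emph{degrees} occur (which is what you actually prove and use), since $L^*(C[x])$ is in general a proper subspace of the space of polynomials of degree at least $d$.
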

\begin{proof}
If $q(x)=0$ then from Example~\ref{exam:constantcoefficients} one sees that
$$
  \sind(L)=0=\max\{\deg(q),0\}.
$$
Assume that $q(x)\neq 0$.
By Lemma~\ref{lm:stabilityindex}, $\sind(L)\leq \deg(q)$. It remains to show that $\delta(L,d)=0$ for any polynomial $d$ of degree $\deg(q)-1$. Suppose on the contrary that there is an $l\in k$ such that $L^*(l)$ is a polynomial of degree $\deg(q)-1$. Suppose that $l$ is not a polynomial and $c\in C$ is a pole of $l$. Then $c$ must be a pole of $L^*(l)$ and so $L^*(l)$ can not be a polynomial. However, this implies that $\deg(L^*(l))=\deg(q)+\deg(l)>\deg(q)-1$, a contradiction. Hence $\delta(L,d)=0$ for any polynomial $d$ of degree $\deg(q)-1$.
\end{proof}

\subsection{Stability indices of first-order operators}
\label{sec:firstorderoperators}
In this subsection, we shall focus on studying the stability indices of first-order operators.
Suppose that $f\in k$ and $c\in C$. Let
$$
   f=\sum_{i\geq \ell} a_i (x-c)^i \quad  \text{with $a_i\in C$ and $a_\ell\neq 0$}
$$
be the power series expansion of $f$ at $c$. Then $\ell$ is called the order of $f$ at $c$, denoted by $\ord_c(f)$, and $a_{-1}$ is called the residue of $f$ at $c$, denoted by $\res_c(f)$. Similarly, we can define the order and residue of $f$ at $\infty$. Let
$$
    f=\sum_{i\geq \ell} a_i \left(\frac{1}{x}\right)^{i}, a_i\in C, a_\ell\neq 0
$$
be the power series expansion of $f$ at $\infty$. Then $\ell$ is called the order of $f$ at $\infty$, denoted by $\ord_\infty(f)$, and $a_1$ (not $a_{-1}$ in this case) is called the residue of $f$ at $\infty$, denoted by $\res_\infty(f)$. Later one will see that the residues of $f$ play an important role in estimating the stability index of $D+f$. We shall use $\cS(f)$ to denote the set of $c\in C$ such that $c$ is a simple pole of $f$ and $\res_c(f)$ is a negative integer. As usual, we use $\den(f)$ to denote the denominator of $f$.
\begin{notation}
\label{not:Delta}
For a rational function $f \in k$, set
$$
    \Delta(f)=
             { \prod\limits_{c\in \cS}(x-c)^{-\res_c(f)} }
$$
\end{notation}
\begin{lem}
\label{lm:denominators}
Suppose that $L=D+f$ and $r\in k$. If $L^*(r)$ is a nonzero polynomial then the denominator of $r$ divides $\Delta(f)$.  {Furthermore, the zeros of the denominator of $r$ has the same multiplicity as $\Delta(f)$.}
\end{lem}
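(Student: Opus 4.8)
The plan is to work locally at each finite pole of $r$ and read off the constraints forced by the requirement that $L^*(r)$ has no pole there. First I would record that for $L=D+f$ the adjoint is $L^*=-D+f$, so the hypothesis reads
\[
  L^*(r)=fr-r'=p
\]
for a nonzero polynomial $p$; in particular $fr-r'$ is regular at every point $c\in C$.

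Next I would fix an arbitrary root $c$ of $\den(r)$, say a pole of $r$ of order $m\geq 1$, and expand both $r$ and $f$ as Laurent series in $(x-c)$. Writing the principal part of $r$ as $r_{-m}(x-c)^{-m}+\cdots$ with $r_{-m}\neq 0$, differentiation shows $r'$ has a pole of order exactly $m+1$ with leading coefficient $-m\,r_{-m}\neq 0$. The heart of the argument is a comparison of the pole order of $fr$ with that of $r'$. \emph{(i)} If $f$ is regular at $c$, then $fr$ has pole order at most $m$, so the order $-(m+1)$ coefficient of $fr-r'$ is $m\,r_{-m}\neq 0$, giving a genuine pole and contradicting regularity. \emph{(ii)} If $f$ has a pole of order $a\geq 2$ at $c$, then $fr$ has pole order $a+m\geq m+2>m+1$ with nonzero leading coefficient, which cannot be cancelled by $r'$; again $fr-r'$ has a pole, a contradiction. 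Hence $f$ must have a pole of order exactly $1$ at $c$, i.e. $c$ is a simple pole of $f$.

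With $c$ a simple pole of $f$, both $fr$ and $r'$ contribute to the order $-(m+1)$ term of $fr-r'$, and matching that top coefficient to zero (forced by regularity of $p$ at $c$) yields
\[
  \big(\res_c(f)+m\big)\,r_{-m}=0 .
\]
Since $r_{-m}\neq 0$ this gives $\res_c(f)=-m$, a negative integer. Thus every pole $c$ of $r$ lies in $\cS$, and the multiplicity $m=\ord_c(\den(r))$ equals $-\res_c(f)$, which is exactly the exponent of $(x-c)$ in $\Delta(f)$. Ranging over all poles of $r$ then gives $\den(r)\mid\Delta(f)$ together with the asserted equality of multiplicities at each zero of $\den(r)$.

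The main obstacle is the pole-order bookkeeping in the comparison step: one must treat the three regimes ($f$ regular, $f$ with a simple pole, $f$ with a higher-order pole) separately and verify in each case that the leading principal-part coefficient of $fr-r'$ cannot vanish unless $f$ has a simple pole with residue exactly $-m$. Everything else is a routine extraction of leading Laurent coefficients.
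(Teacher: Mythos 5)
Your proof is correct and follows essentially the same route as the paper's: a local Laurent-series comparison at each zero $c$ of $\den(r)$ of multiplicity $m$, forcing $f$ to have a simple pole there with $\res_c(f)=-m$, which is exactly the exponent of $(x-c)$ in $\Delta(f)$. The only difference is presentational: you spell out the three pole-order regimes ($f$ regular, simple pole, higher-order pole) that the paper compresses into the single order equation $-m-1=-m+\mu$.
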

\begin{proof}
Set $r_2=\den(r)$. If $r_2\in C$ then there is nothing to prove. Suppose that $r_2\notin C$. Let $c\in C$ be a zero of $r_2$ with multiplicity $m$. It suffices to show that $(x-c)^m$ divides $\Delta(f)$. One has that $\ord_c(r)=-m$ and $\ord_c(r')=-m-1$. Write $r=r_0 (x-c)^{-m}+\cdots$ and  $f=f_0 (x-c)^\mu+\cdots$, where $r_0, f_0\in C$ and $\mu=\ord_c(f)$. From $-r'+fr=d$, one sees that $-m-1=-m+\mu$ and $mr_0+f_0r_0=0$. These imply that $\mu=-1$ and $f_0=-m$. Consequently, $c$ is a simple pole of $f$ and $\res_c(f)=f_0=-m$. So $(x-c)^m$ divides $\Delta(f)$  {and the multiplicity of $c$ in $\Delta(f)$ is equal to $m$.}
\end{proof}

The following result reduces the stability problem on first order operators
into that on operators of special form.
\begin{prop}
\label{prop:reduction1}
Suppose that $L=D+f$ and $h\in C[x]$ is a non zero polynomial. If $L$ is stable then $L-\frac{h'}{h}$ is stable.
\end{prop}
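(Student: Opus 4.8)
The plan is to reduce the stability of $L-\frac{h'}{h}$ to that of $L$ by a conjugation by $h$. Write $M := L - \frac{h'}{h} = D + \left(f - \frac{h'}{h}\right)$, so that its adjoint is $M^* = -D + \left(f - \frac{h'}{h}\right)$, acting by $M^*(r) = -r' + \left(f-\frac{h'}{h}\right)r$. The first step I would carry out is to record the key algebraic identity relating $M^*$ and $L^*$: for every $r \in k$,
\[
    L^*(hr) = h\,M^*(r).
\]
This is a direct computation, since both sides expand to $fhr - h'r - hr'$; intuitively it says that multiplication by $h$ conjugates $M^*$ into $L^*$, the logarithmic-derivative term $h'/h$ being exactly what is produced when the derivation is pushed past the factor $h$. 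I expect this identity to be the crux of the whole argument: once it is in hand everything else is formal, so discovering the right way to package this conjugation is the main (and essentially the only) obstacle.

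Second, I would invoke Proposition~\ref{prop:propertyofstable}. Since $L$ is assumed stable, that proposition gives $P_L = C[x]$, i.e. $L^*(y)=p$ has a solution in $k$ for \emph{every} polynomial $p$. I then claim $P_M = C[x]$ as well. Indeed, let $q\in C[x]$ be arbitrary; then $hq\in C[x]$, so by stability of $L$ there is $u\in k$ with $L^*(u)=hq$. Setting $s = u/h \in k$ and applying the identity above with $r=s$ yields $h\,M^*(s) = L^*(hs) = L^*(u) = hq$, whence $M^*(s)=q$. Thus $q\in P_M$, and since $q$ was arbitrary, $P_M = C[x]$.

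Finally, I would conclude from the definition of the stability index. For each $i\geq 0$ the polynomial $x^i$ lies in $P_M=C[x]$ and has degree $i$, so $\delta\!\left(M, x^i\right)=1$. Hence for every $i$ there is a polynomial of degree $i$ in $P_M$, which by Definition~\ref{def:stabilityindex} means $\sind(M)=0$; that is, $M = L - \frac{h'}{h}$ is stable, as required. The only routine points to check are that $s=u/h$ genuinely lies in $k$ and that the adjoint formula $M^* = -D + (f-h'/h)$ holds, both of which are immediate, so the entire weight of the proof rests on the conjugation identity of the first step.
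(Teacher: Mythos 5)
Your proof is correct and is essentially the paper's own argument: the paper likewise uses Proposition~\ref{prop:propertyofstable} to solve $L^*(y)=x^ih$ and then verifies by direct computation that $\left(L-\frac{h'}{h}\right)^*\!\left(\frac{r_i}{h}\right)=\frac{-r_i'+fr_i}{h}=x^i$, which is exactly your conjugation identity $L^*(hr)=h\,M^*(r)$ applied with $r=r_i/h$. The only difference is presentational: you isolate the identity as a stated lemma and work with an arbitrary polynomial $q$, while the paper inlines the computation for the specific right-hand sides $x^ih$.
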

\begin{proof}
For each $i\geq 0$, let $r_i$ be a solution of $L^*(y)=x^ih$ in $k$. Such $r_i$ exists because of Proposition~\ref{prop:propertyofstable}.
A simple calculation yields that
$$
   \left(L-\frac{h'}{h}\right)^*\left(\frac{r_i}{h}\right)=-\left(\frac{r_i}{h}\right)'+\frac{fr_i}{h}-\frac{r_ih'}{h}=\frac{-r_i'+fr_i}{h}=x^i
$$
i.e. $\delta(L-h'/h,x^i)=1$ for all $i\geq 0$. So $L-h'/h$ is stable.
\end{proof}

The converse of Proposition~\ref{prop:reduction1}  {sometimes} is not true. For example, $D$ is stable but $D+1/x$ is not (see Corollary~\ref{cor:stabilityindexfirstorder} for a proof). However, if $h$ is a special divisor of the denominator of $f$ then the converse is still true.
\begin{lem}
\label{lm:solutionforrational}
Suppose that $L=D+f$ and $c\in C$ is a pole of $f$.
\begin{enumerate}
    \item For each $i\geq \sind(L)+1$, there is a polynomial $p_i\in C[x]$ of degree $i$ such that $L^*(y)=p_i/(x-c)$ has a solution in $k$.
   \item If $L$ is stable then for each $i\geq 0$, $L^*(y)=x^i/(x-c)$ has a solution in $k$.
\end{enumerate}{}
\end{lem}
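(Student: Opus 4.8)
The plan is to treat the two assertions separately, since part (1) is an easy bookkeeping statement whereas part (2) carries the real content and is the only place where the hypothesis that $c$ is a pole of $f$ is genuinely used. For part (1), I would argue as follows: since $i-1\ge \sind(L)$, the definition of the stability index (Definition~\ref{def:stabilityindex}) furnishes a polynomial $q$ of degree $i-1$ with $\delta(L,q)=1$, i.e.\ a rational solution $r$ of $L^*(y)=q$. Setting $p_i=(x-c)q$ produces a polynomial of degree $i$ with $p_i/(x-c)=q$, so the same $r$ solves $L^*(y)=p_i/(x-c)$. Thus part (1) is immediate and needs no assumption on $c$.

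The substance is part (2), and I would first isolate the crux: assuming $L$ is stable, $L^*(y)=1/(x-c)$ has a rational solution. The idea is to exhibit a single polynomial $\phi$ for which $L^*(\phi)$ has a simple pole at $c$ and is otherwise a polynomial. Writing $f=\num(f)/\den(f)$ in lowest terms, the pole hypothesis gives $(x-c)\mid\den(f)$, so I set $\phi=\den(f)/(x-c)\in C[x]$. Then $f\phi=\num(f)/(x-c)$, and because coprimality of $\num(f)$ and $\den(f)$ forces $(x-c)\nmid\num(f)$, the constant $\alpha:=\num(f)(c)$ is nonzero; Euclidean division yields $f\phi=Q+\alpha/(x-c)$ for some $Q\in C[x]$. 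Hence $L^*(\phi)=-\phi'+f\phi=P+\alpha/(x-c)$ with $P:=Q-\phi'\in C[x]$. Now stability enters through Proposition~\ref{prop:propertyofstable}: $P_L=C[x]$, so $L^*(y)=P$ admits a rational solution $y_P$, and therefore $L^*\!\big(\tfrac1\alpha(\phi-y_P)\big)=1/(x-c)$, which proves the crux.

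Part (2) then follows by $C$-linearity. For $c\neq 0$, division gives $x^i/(x-c)=Q_i+c^i/(x-c)$ with $Q_i\in C[x]$; the polynomial part is solvable because $P_L=C[x]$, and $c^i/(x-c)$ is solvable by scaling the crux solution, so their sum solves $L^*(y)=x^i/(x-c)$. The case $c=0$ is handled the same way, with $x^i/x=x^{i-1}$ a polynomial for $i\ge 1$ and $1/x$ covered directly by the crux.

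The main obstacle is precisely the crux, and within it the construction of $\phi$: everything hinges on $\alpha=\num(f)(c)$ being nonzero, which is exactly where the assumption that $c$ is a pole of $f$ is consumed (it makes $(x-c)\mid\den(f)$, and coprimality then gives $(x-c)\nmid\num(f)$). This also clarifies why the hypothesis cannot be dropped: for $L=D$ and $c=0$ one has $f=0$, so $c$ is not a pole, the polynomial $\phi$ degenerates, and indeed $D$ is stable while $D+1/x$ is not, consistent with the remark following Proposition~\ref{prop:reduction1}. A secondary point I would check carefully is that the reduction of $L^*(\phi)$ to the form $P+\alpha/(x-c)$ leaves $P$ an \emph{arbitrary} polynomial, so that solving $L^*(y)=P$ genuinely requires the full strength $P_L=C[x]$ from stability rather than the mere finiteness of $\sind(L)$.
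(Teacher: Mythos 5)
Your proof is correct. For part (2) you follow essentially the same route as the paper: your $\phi=\den(f)/(x-c)$ is the paper's $M$, your decomposition $L^*(\phi)=P+\alpha/(x-c)$ with $\alpha=\num(f)(c)\neq 0$ (forced by coprimality and the pole hypothesis) is exactly the paper's $L^*(M)=\tilde{M}+\alpha/(x-c)$, and your reduction of $x^i/(x-c)$ to a polynomial plus $\beta/(x-c)$, with the polynomial part handled via Proposition~\ref{prop:propertyofstable}, is the paper's closing argument almost verbatim (the paper merely splits off the degenerate case $\tilde{M}=0$, which your uniform use of $P_L=C[x]$ absorbs, since $L^*(y)=0$ has the solution $0$). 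Where you genuinely differ is part (1). The paper proves it with the same machinery as part (2): it computes $L^*(M)=\tilde{M}+\alpha/(x-c)$ and then adds a linear combination $\sum_j \beta_j r_j$ of rational solutions of $L^*(y)=d_j$ with $\deg d_j=j\geq \sind(L)$ to steer $\tilde{M}$ to a polynomial $h$ of degree exactly $i-1$, obtaining $p_i=(x-c)h+\alpha$; this $p_i$ satisfies $p_i(c)=\alpha\neq 0$, so $p_i/(x-c)$ has an honest pole at $c$. You instead observe that the statement, read literally, is a triviality: take $q$ of degree $i-1$ with $\delta(L,q)=1$ and set $p_i=(x-c)q$, so that $p_i/(x-c)=q$ has no pole at all and the hypothesis on $c$ is never touched. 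Both arguments prove the lemma as stated, and both suffice for its only application, Proposition~\ref{prop:reduction2}(1), where one only needs $(L+1/(x-c))^*((x-c)r_i)=p_i$ to be a polynomial of degree $i$. What your shortcut gives up is the extra, unstated information that $p_i$ can be chosen coprime to $x-c$; what it buys is brevity and the clear diagnosis, which you state explicitly, that all the real content of the lemma --- and the only place the pole hypothesis is consumed --- sits in part (2).
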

\begin{proof}
(1). Write $\den(f)=(x-c)M$. One has that
$$
  L^*(M)=-M'+fM=-M'+f_1/(x-c)=\tilde{M}+\alpha/(x-c)
$$
where $f_1$ is the numerator of $f$, $\tilde{M}\in C[x]$ and $\alpha\in C$. For each $j\geq \sind(L)$, let $d_j\in C[x]$ be a polynomial of degree $j$ such that $L^*(y)=d_j$ has a solution $r_j\in k$. Fix an $i\geq \sind(L)+1$. It is easy to see that there are $\ell>0, \beta_1,\dots,\beta_\ell\in C$ such that $h=\tilde{M}+\sum_{j=\sind(L)}^\ell {\beta}_j d_j$ is of degree $i-1$. Then
$$
   L^*\left(M+\sum_{j=\sind(L)}^\ell \beta_j r_j\right)=h+\frac{\alpha}{x-c}=\frac{(x-c)h+\alpha}{x-c}.
$$
Since $\deg((x-c)h+\alpha)=i$, $(x-c)h+\alpha$ is a polynomial as required.

(2). As in (1), one has that $L^*(M)=\tilde{M}+\alpha/(x-c)$. As $\gcd(f_1,x-c)=1$, $\alpha\neq 0$. If $\tilde{M}=0$ then $M/\alpha$ is a solution of $L^*(y)=1/(x-c)$. Otherwise, let $r$ be a solution of $L^*(y)=\tilde{M}$. Such $r$ exists due to Proposition~\ref{prop:propertyofstable}. Then $(M-r)/\alpha$ is a solution of $L^*(y)=1/(x-c)$. This proves the case $i=0$. For the case $i>0$, write $x^i/(x-c)=p+\beta/(x-c)$ where $p\in C[x]$ and $\beta\in C$. Let $r_1$ be a solution of $L^*(y)=p$ in $k$ and $r_2$ a solution of $L^*(y)=\beta/(x-c)$. Then $r_1+r_2$ is a solution of $L^*(y)=x^i/(x-c)$ in $k$.
\end{proof}
\begin{prop}
\label{prop:reduction2}
Suppose that $L=D+f$ and $c\in C$ is a pole of $f$. Then
\begin{enumerate}
    \item $\sind(L+1/(x-c))\leq \sind(L)+1$;
    \item if \ $L$ is stable then $L+1/(x-c)$ is also stable.
\end{enumerate}
\end{prop}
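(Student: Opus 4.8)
The plan is to exploit a conjugation (gauge) identity that rewrites $L+1/(x-c)$ as $L$ sandwiched between the multiplication operators $x-c$ and its inverse, and then to feed the already-established Lemma~\ref{lm:solutionforrational} into this identity. Concretely, since $1/(x-c)=(x-c)'/(x-c)$, a direct computation shows that, as operators,
\[
   L+\frac{1}{x-c}=\frac{1}{x-c}\,L\,(x-c),
\]
where $x-c$ and $1/(x-c)$ denote multiplication operators. Taking adjoints, and using that multiplication operators are self-adjoint while adjunction reverses the order of composition, I would obtain
\[
   \left(L+\frac{1}{x-c}\right)^*\!\big((x-c)s\big)=(x-c)\,L^*(s)\qquad\text{for every }s\in k.
\]
(This can equally be checked by substituting $r=(x-c)s$ into $(L+1/(x-c))^*(r)=-r'+(f+1/(x-c))r$.) This identity is the engine of the argument: it converts a rational solution of $L^*(y)=p/(x-c)$ into a rational solution of $(L+1/(x-c))^*(y)=p$.

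For part (1), I would fix $i\geq \sind(L)+1$. By Lemma~\ref{lm:solutionforrational}(1) there is a polynomial $p_i\in C[x]$ of degree $i$ and an $s_i\in k$ with $L^*(s_i)=p_i/(x-c)$. Substituting $s=s_i$ into the identity gives $(L+\frac{1}{x-c})^*((x-c)s_i)=(x-c)\cdot p_i/(x-c)=p_i$, a polynomial of degree exactly $i$ admitting the rational solution $(x-c)s_i$. Hence $\delta(L+\frac{1}{x-c},p_i)=1$ for every $i\geq \sind(L)+1$, which by Definition~\ref{def:stabilityindex} yields $\sind(L+\frac{1}{x-c})\leq \sind(L)+1$.

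For part (2), suppose $L$ is stable, so $\sind(L)=0$. Part (1) already produces, for every $i\geq 1$, a degree-$i$ polynomial with $\delta(L+\frac{1}{x-c},\cdot)=1$, so the only remaining case is $i=0$. Here I would invoke Lemma~\ref{lm:solutionforrational}(2), which (using the stability of $L$) provides an $s_0\in k$ with $L^*(s_0)=1/(x-c)$; the identity then gives $(L+\frac{1}{x-c})^*((x-c)s_0)=1$, a nonzero constant, so $\delta(L+\frac{1}{x-c},1)=1$. Thus for every $i\geq 0$ there is a degree-$i$ polynomial with $\delta=1$, i.e. $\sind(L+\frac{1}{x-c})=0$, so $L+\frac{1}{x-c}$ is stable.

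The only genuinely creative step is spotting the conjugation identity; once it is in place the proof is bookkeeping, since Lemma~\ref{lm:solutionforrational} was tailored precisely to supply rational solutions of $L^*(y)=p/(x-c)$. I therefore expect the verification of the operator identity and the attendant adjoint computation to be the part needing the most care, although it is routine. One small point to watch is that the polynomials $p_i$ must have degree \emph{exactly} $i$, which is guaranteed by Lemma~\ref{lm:solutionforrational}(1) and, in the case $i=0$, by the constant $1$ being nonzero.
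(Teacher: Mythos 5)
Your proof is correct and is essentially the paper's own argument: both rest on Lemma~\ref{lm:solutionforrational} to produce rational solutions of $L^*(y)=p_i/(x-c)$, and both then observe that $(x-c)r_i$ solves $\left(L+\frac{1}{x-c}\right)^*(y)=p_i$, the paper verifying this by the direct computation $\left(L+\frac{1}{x-c}\right)^*((x-c)r_i)=(x-c)(-r_i'+fr_i)=p_i$. Your conjugation identity $L+\frac{1}{x-c}=\frac{1}{x-c}\,L\,(x-c)$ is just a tidier packaging of that same computation (and your split of part (2) into the case $i=0$ via Lemma~\ref{lm:solutionforrational}(2) and $i\geq 1$ via part (1) matches the paper's ``similar argument'').
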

\begin{proof}
(1). For each $i\geq \sind(L)+1$, by Lemma~\ref{lm:solutionforrational}, there exists a polynomial $p_i$ of degree $i$ such that $L^*(y)= {\frac{p_i}{x-c}}$ has a solution $r_i$ in $k$. Then
$$
  \left(L+\frac{1}{x-c}\right)^*((x-c) {r_i})=(x-c)(-r_i'+fr_i)=p_i.
$$
Hence $\sind(L+1/(x-c))\leq \sind(L)+1$.

(2). Use a similar argument as in (1).
\end{proof}

We now apply Theorem~\ref{thm:bound} to the first order operators to obtain an upper bound on the stability indices of such
operators. We use $\num(f)$ to denote the numerator of a rational function $f$.
\begin{prop}
\label{prop:upperbound}
Suppose that $L=D+f$ and $\nu=\res_\infty(f)$. Then
\[
\sind(L)\leq \begin{cases}
   \max\{\deg(f_1),\deg(f_2)\} & \ord_\infty(f)\neq 1\\
    \deg(f_2)-1 &  \ord_\infty(f)=1\,\,\mbox{and}\,\,\nu\notin \bZ_{\leq 0}\\
   -\nu+\deg(f_2) & \mbox{otherwise}
\end{cases}
\]
where $f_1=\num(f)$ and $f_2=\den(f)$.
\end{prop}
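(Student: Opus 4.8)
The plan is to bound $\sind(L)$ directly from Definition~\ref{def:stabilityindex}: for every sufficiently large $i$ I will exhibit a rational $r$ with $L^*(r)$ a polynomial of degree exactly $i$, which by definition forces $\sind(L)$ below the relevant threshold. (Equivalently one may compute the quantity $\cB(L)$ of Theorem~\ref{thm:bound} and invoke $\sind(L)\le\cB(L)$; the two routes produce the same case analysis.) The first move is to clear denominators. Writing $f_2=\den(f)$ (monic) and $f_1=\num(f)$, the identity $\delta(L,p)=\delta(f_2L,p)$ lets me replace $L$ by the polynomial-coefficient operator $P=f_2D+f_1$, whose adjoint is $P^*=-f_2D+(f_1-f_2')$. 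The payoff of this reduction is that $P^*$ sends polynomials to polynomials, so $P^*(x^s)$ is automatically a polynomial for every $s\ge 0$, and the whole problem collapses to reading off $\deg P^*(x^s)$ as a function of $s$.

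The computation $P^*(x^s)=-s\,f_2\,x^{s-1}+(f_1-f_2')x^s$ reduces everything to its top-degree term. Set $d_1=\deg(f_1)$, $d_2=\deg(f_2)$, so that $\ord_\infty(f)=d_2-d_1$, and note that the three summands sit in degrees $d_2-1+s$, $d_1+s$, and $d_2-1+s$ respectively. When $\ord_\infty(f)\le 0$ (so $d_1\ge d_2$) the term $f_1x^s$ strictly dominates and no cancellation can occur, so $\deg P^*(x^s)=d_1+s$ for all $s\ge 0$; letting $s$ range over $\bZ_{\ge 0}$ realizes every degree $\ge d_1=\max\{d_1,d_2\}$. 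When $\ord_\infty(f)\ge 2$ (so $d_1\le d_2-2$), the terms $-s f_2 x^{s-1}$ and $-f_2'x^s$ share the top degree $d_2-1+s$ with combined leading coefficient $-(s+d_2)$, which is nonzero for every $s\ge 0$; hence $\deg P^*(x^s)=d_2-1+s$ and every degree $\ge d_2-1$ is realized, and since $d_2-1\le\max\{d_1,d_2\}$ this again gives the claimed bound. Together these cover $\ord_\infty(f)\ne 1$ and yield $\sind(L)\le\max\{\deg(f_1),\deg(f_2)\}$.

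The remaining case $\ord_\infty(f)=1$, i.e.\ $d_1=d_2-1$, is the crux and is where the residue at infinity enters. Here all three summands reach the top degree $d_2-1+s$, so their combined leading coefficient is an affine function $c(s)=\lc(f_1)-d_2-s$ that vanishes at the single value $s_0=\lc(f_1)-d_2$. Because $\ord_\infty(f)=1$ forces $f=\lc(f_1)/x+O(x^{-2})$ at infinity, $\lc(f_1)$ is determined by $\nu=\res_\infty(f)$, and the delicate point is that, with the sign convention fixing $\res_\infty$, one has $s_0\in\bZ_{\ge 0}$ precisely when $\nu$ is a nonpositive integer with $-\nu\ge d_2$. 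Thus if $\nu\notin\bZ_{\le 0}$ then $c(s)\ne 0$ for all $s\ge 0$, every degree $\ge d_2-1$ is realized, and $\sind(L)\le d_2-1=\deg(f_2)-1$. If instead $\nu\in\bZ_{\le 0}$, then at most the single degree $d_2-1+s_0=-\nu-1$ is lost (the leading term of $P^*(x^{s_0})$ drops), while every $s\ne s_0$ still produces its full degree $d_2-1+s$; a short count then shows that all degrees $\ge -\nu+\deg(f_2)$ are attained, giving $\sind(L)\le -\nu+\deg(f_2)$.

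I expect the main obstacle to be exactly this cancellation phenomenon in the $\ord_\infty(f)=1$ case: one must verify that $c(s)$, being affine, kills \emph{at most one} degree, locate that degree through the root $s_0$, and correctly translate the arithmetic statement ``$s_0$ is a nonnegative integer'' into the residue statement ``$\nu\in\bZ_{\le 0}$''. This translation is sensitive to the sign in the definition of $\res_\infty$, and the borderline operator $L=D+\beta/(x-c)$ of Example~\ref{exam:onepole}---stable exactly when $\beta$ is not a positive integer, with a single degree genuinely missing otherwise---is the right test case for pinning down these constants.
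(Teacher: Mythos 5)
Your proof is correct and is essentially the paper's own argument: the paper computes the bound $\cB(L)$ of Theorem~\ref{thm:bound} via the indicial polynomial of $L^*$ at infinity, while you inline the proof of that theorem by clearing denominators and reading off the degree of $(f_2L)^*(x^s)$ as a function of $s$ --- the same computation and the same three-way case analysis on $\ord_\infty(f)$ and $\nu$. Your explicit treatment of the single cancellation at $s_0$ in the $\ord_\infty(f)=1$ case, and your observation that the statement forces the sign convention $\res_\infty(f)=-a_1$ (the one the paper's proof itself uses, despite its stated definition), are both sound.
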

\begin{proof}
It suffices to show that $\cB(L)$ equals the corresponding bounds.
Let $f=\alpha \left(\frac{1}{x}\right)^t+\dots$ be the power series expansion of $f$ at $\infty$, where $t=\ord_\infty(f)$. Then for each $s\geq 0$,
$$
   L^*(x^s)=-sx^{s-1}+\alpha x^{s-t}+\dots.
$$
Suppose that $t>1$. Then $\ind^{L^*}(s)=-s$ and $\sigma^{L^*}=-1$. Thus $\cB(L)=\deg(f_2)=\max\{\deg(f_1),\deg(f_2)\}$ because $\deg(f_1)<\deg(f_2)$. Suppose that $t<1$. Then $\ind^{L^*}(s)=\alpha$ and $\sigma^{L^*}=-t$. Hence $\cB(L)=-t+\deg(f_2)=\deg(f_1)=\max\{\deg(f_1),\deg(f_2)\}$. Now suppose that $t=1$. Then $\alpha=-\res_\infty(f)=-\nu$. Furthermore, $\ind^{L^*}(s)=-(s-\alpha)$ and $\sigma^{L^*}=-1$. If $\nu$ is not a nonnegative integer then $\max V_{\bZ_{\geq 0}}(\ind^{L^*})=-1$ and so $\cB(L)=\deg(f_2)-1$. Otherwise if $\nu$ is a nonnegative integer then $\max V_{\bZ_{\geq 0}}(\ind^{L^*})=-\nu$ and thus $\cB(L)=-\nu+\deg(f_2)$.
\end{proof}
In what  {follows}, we shall present a lower bound.
\begin{lem}
\label{lm:lowerbound}
Suppose that $L=D+f$ with $\cS(f)=\emptyset$. Set $N=\deg(\den(f))$ and $\nu=\res_\infty(f)$.
 Then
\[
\sind(L)=\begin{cases}
   N-\min\{1,\ord_\infty(f)\} & \ord_\infty(f)\neq 1\,\,\mbox{or}\,\,\nu\notin \bZ_{\leq -N}\\
   -\nu & \mbox{otherwise}
\end{cases}.
\]
\end{lem}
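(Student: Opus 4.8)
The plan is to turn the problem into a purely polynomial degree count. Since $\cS(f)=\emptyset$, the product defining $\Delta(f)$ in Notation~\ref{not:Delta} is empty, so $\Delta(f)=1$. By Lemma~\ref{lm:denominators}, whenever $L^*(r)$ is a nonzero polynomial the denominator of $r$ divides $\Delta(f)=1$; hence every rational solution $r$ of $L^*(y)=p$ with $p\in C[x]\setminus\{0\}$ is itself a polynomial. Writing $f=f_1/f_2$ with $f_1=\num(f)$, $f_2=\den(f)$ monic and $\gcd(f_1,f_2)=1$, the identity $L^*(r)=-r'+f_1 r/f_2$ forces $f_2\mid r$, so $r=f_2 g$ for a polynomial $g$, and a direct computation gives
\[
L^*(f_2 g)=(f_1-f_2')\,g-f_2\,g'=:T(g).
\]
Thus $\delta(L,p)=1$ for some $p$ of degree $i$ if and only if $i$ lies in the achievable set $\mathcal{D}=\{\deg T(g)\mid g\in C[x],\ T(g)\neq 0\}$, and $\sind(L)$ is exactly the least $m$ with $\bZ_{\geq m}\subseteq\mathcal{D}$.

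The core step is to track the leading behaviour of $T(g)$ as a function of $d=\deg(g)$. Set $N=\deg(f_2)$ and $t=\ord_\infty(f)=N-\deg(f_1)$. For $g=c_d x^d+\cdots$ I would read off the coefficient of the relevant top power and split according to $t$. When $t<1$ the term $f_1 g$ (degree $(N-t)+d$) strictly dominates both $f_2' g$ and $f_2 g'$ (degree $N+d-1$), so $\deg T(g)=(N-t)+d$ with no cancellation; this also covers $f_2\in C$, i.e.\ $f$ a polynomial. When $t>1$ the term $-f_2'g$ dominates $f_1 g$, and the coefficient of $x^{N+d-1}$ in $T(g)$ is $-(N+d)c_d\neq 0$, so $\deg T(g)=N+d-1$. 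In either case there is no top-degree cancellation, the minimum is attained at $d=0$, every larger value is realised by $g=x^d$, and one obtains $\mathcal{D}=\bZ_{\geq N-\min\{1,t\}}$, hence $\sind(L)=N-\min\{1,t\}$.

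The delicate case, and the main obstacle, is $t=1$, where $\deg(f_1)=\deg(f_2')=N-1$ and the two candidate top terms collide: the coefficient of $x^{N+d-1}$ in $T(x^d)$ equals $\lc(f_1)-N-d$. Using the paper's residue convention (as in the proof of Proposition~\ref{prop:upperbound}, where $f=\alpha x^{-1}+\cdots$ gives $\alpha=-\nu$) one has $\lc(f_1)=-\nu$, so this coefficient is $-\nu-N-d$, which vanishes precisely when $d=-\nu-N$ is a nonnegative integer, i.e.\ when $\nu\in\bZ_{\leq -N}$. I would then show that in this subcase the single degree $-\nu-1$ is genuinely omitted from $\mathcal{D}$: for $d\leq -\nu-N$ one has $\deg T(g)\leq -\nu-2$ (cancellation at the top when $d=-\nu-N$, and a strictly smaller top power otherwise), while for $d\geq -\nu-N+1$ one has $\deg T(g)=N+d-1\geq -\nu$, every such value being hit by $g=x^d$. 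Hence $\mathcal{D}=\bZ_{\geq N-1}\setminus\{-\nu-1\}$ and $\sind(L)=-\nu$; when instead $\nu\notin\bZ_{\leq -N}$ no cancellation occurs and $\mathcal{D}=\bZ_{\geq N-1}$, giving $\sind(L)=N-1$. Collecting the three cases yields the stated formula. The only real subtlety is bookkeeping the sign of $\res_\infty$ and verifying that the omitted degree cannot be recovered by a cleverer $g$; this is controlled by the uniform estimate $\deg T(g)\leq\max\{N-1,N-t\}+d$ together with the exact top coefficient computed above.
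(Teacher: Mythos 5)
Your proof is correct and takes essentially the same route as the paper's: reduce to polynomial solutions divisible by $\den(f)$ (via Lemma~\ref{lm:denominators} and $\gcd(\num(f),\den(f))=1$), then compare leading terms at infinity with a case split on $\ord_\infty(f)$ versus $1$, exhibiting $\den(f)x^d$ as witnesses and isolating the single unattainable degree $-\nu-1$ when $\nu\in\bZ_{\leq -N}$. Your substitution $r=f_2g$ and the achievable-degree set $\mathcal{D}$ merely repackage the paper's expansion~(\ref{eqn:expansion}), and your top coefficient $-\nu-N-d$ is exactly the paper's $-c(s+\nu)$ with $s=N+d$.
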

\begin{proof}
 Suppose that $r\in k$ satisfies that $L^*(r)=d$ for some nonzero polynomial $d$. By Corollary~\ref{cor:criterion}, $r$ must be a polynomial. Furthermore, from
$
    -r'+rf=d,
$
one sees that $\den(f)$ divides $r$. In particular, $\deg(r)\geq \deg(\den(f))=N$.
Let
\begin{align*}
   r=c x^{s}+\dots ,\,\,
   f=\alpha x^{-t}+\dots,
\end{align*}
be the power series expansions of $r$ and $f$ respectively,
where $s=\deg(r)$ and $t=\ord_\infty(f)$. From $-r'+rf=d$ again, one has that
\begin{equation}
\label{eqn:expansion}
   -scx^{s-1}+\dots+c\alpha x^{s-t}+\dots=d.
\end{equation}

Suppose that $t=\ord_\infty(f)<1$. Then
$
  \deg(d)
$
must be equal to $s-t$ that is not less than $N-t$.
In other words, if $\delta(L,d)=1$ for some nonzero polynomial $d$ then $\deg(d)\geq N-t$. So $\sind(L)\geq N-t$. On the other hand, for each $i\geq 0$,
$$
 L^*(\den(f)x^i)=-\alpha x^{N+i-t}+\mbox{lower terms}.
$$
 This implies that for each $s\geq N-t$, $\den(f)x^{s-N+t}$ is a solution of $L^*(y)=-\alpha x^s+\mbox{lower terms}$, i.e. $\sind(L)\leq N-t$. Consequently, $\sind(L)=N-t$.
Similarly, one can prove the case $t>1$.

Now assume that $t=1$.
In this case, $\alpha=-\res_\infty(f)=-\nu$ and $s=\deg(r)\geq N>0$. Moreover, the equality (\ref{eqn:expansion}) becomes
\begin{equation}
\label{eqn:orderone}
-c(s+\nu)x^{s-1}+\mbox{lower terms}=d.
\end{equation}
Suppose that $\nu\notin \bZ_{\leq -N}$. Then $s+\nu\neq 0$ for all $s\geq N$. Hence $\deg(d)$ must be equal to $s-1$. Using a similar argument as in the case $t<1$, one has that $\sind(L)=N-1$.
It remains to show the case $\nu\in \bZ_{\leq -N}$. We first show that $\delta(L,d)=0$ for any polynomial $d\in C[x]$ of degree $-\nu-1$. Suppose on the contrary that there is a polynomial $d\in C[x]$ of degree $-\nu-1$ such that $L^*(y)=d$ has a solution $r\in C[x]$. If $s=\deg(r)>-\nu$ then $s+\nu=0$ by (\ref{eqn:orderone}). This implies that $s=-\nu$, a contradiction. So $s\leq -\nu$. From (\ref{eqn:orderone}), it is obvious that $s$ can not be less than $-\nu$. Thus $s=-\nu$. While, this means that the degree of the left-hand side of (\ref{eqn:orderone}) is less than $-\nu-1$. This contradicts  {the} assumption that $\deg(d)=-\nu-1$. Consequently, $\delta(L,d)=0$ for any polynomial $d$ of degree $-\nu-1$ and so $\sind(L)\geq -\nu$. Finally, for each $i\geq 0$,
$$
   L^*\left(\den(f)x^{i+1-\nu-N}\right)=-(i+1)x^{i-\nu}+\mbox{lower terms}.
$$
This implies that $\sind(L)\leq -\nu$. Therefore $\sind(L)=-\nu$.
\end{proof}
\begin{prop}
\label{prop:lowerbound}
Suppose that $L=D+f$. Set
$$
    N=\deg(\den(f))-|\cS(f)|-\deg(\Delta(f))
$$
and $\nu=\res_\infty(f).$ Then
\[
\sind(L)\geq \begin{cases}
  N-\min\{1,\ord_\infty(f)\} & \ord_\infty(f)\neq 1\,\,\mbox{or}\,\,\nu\notin \bZ_{\leq -N}\\
   -\nu & \mbox{otherwise}
\end{cases}.
\]
\end{prop}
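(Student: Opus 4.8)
Write $B$ for the claimed lower bound. If $B\le 0$ the inequality $\sind(L)\ge B$ is immediate, so assume $B\ge 1$. It then suffices to show that no polynomial of degree $B-1$ lies in $P_L$: if $\sind(L)=m_0\le B-1$, then by definition every integer $i\ge m_0$, in particular $i=B-1$, would be the degree of some $d$ with $\delta(L,d)=1$. So I fix a nonzero polynomial $d$ together with $r\in k$ satisfying $L^*(r)=-r'+fr=d$, and the goal is to force $\deg(d)\ge B$ (and, in the exceptional case, to rule out $\deg(d)=B-1$).

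The first step is to clear the denominator of $r$. By Lemma~\ref{lm:denominators} the poles of $r$ are exactly among the simple poles $c$ of $f$ with $\res_c(f)$ a negative integer, each appearing in $\den(r)$ with multiplicity $-\res_c(f)$; hence $h:=\Delta(f)$ is a polynomial, $\den(r)\mid h$, and $s:=rh\in C[x]$. A direct computation gives $\hat L^*(s)=h\,L^*(r)=hd$, where $\hat L=D+\hat f$ and $\hat f=f+h'/h$. Since adding $h'/h$ cancels precisely the simple poles of $f$ with negative integer residue, $\cS(\hat f)=\emptyset$, so $\hat L$ is covered by Lemma~\ref{lm:lowerbound}. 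Moreover $\hat f s=\hat L^*(s)+s'$ is a polynomial, whence $\den(\hat f)\mid s$ and
\[
  \deg(s)\ \ge\ \deg\den(\hat f)\ =\ \deg\den(f)-|\cS(f)|\ =\ N+\deg\Delta(f).
\]
Writing $\rho:=\deg(s)-\deg\Delta(f)=-\ord_\infty(r)$ for the order of $r$ at infinity, this says $\rho\ge N$. This inequality is the replacement, in the presence of bad poles, for the estimate $\deg(r)\ge N$ used in Lemma~\ref{lm:lowerbound}, and it is where the two corrections $|\cS(f)|$ and $\deg\Delta(f)$ in the definition of $N$ enter.

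With $\rho\ge N$ secured, I would transcribe the infinity analysis of Lemma~\ref{lm:lowerbound} with $\rho$ in place of $\deg(r)$. Expanding $r=c_0x^{\rho}+\cdots$ and $f=\alpha x^{-t}+\cdots$ with $t=\ord_\infty(f)$ and $c_0,\alpha\ne0$, the identity $-r'+fr=d$ forces: if $t<1$ then $\deg(d)=\rho-t\ge N-t$; if $t>1$ then $\deg(d)=\rho-1\ge N-1$; and if $t=1$ with $\nu\notin\bZ_{\le -N}$ then the two candidate top terms at degree $\rho-1$ combine to $-c_0(\rho+\nu)x^{\rho-1}$ with $\rho+\nu\ne0$ for all $\rho\ge N$, so $\deg(d)=\rho-1\ge N-1$. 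In each of these three cases $\deg(d)\ge B$, so in particular no $d$ of degree $B-1$ exists and $\sind(L)\ge B$.

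The remaining case $t=1$ with $\nu\in\bZ_{\le -N}$, where $B=-\nu$, is the delicate one, and here one can only exclude the single degree $B-1=-\nu-1$. The leading relation is $-c_0(\rho+\nu)x^{\rho-1}+(\text{lower order})=d$: if $\rho\ne -\nu$ the top coefficient survives, so $\deg(d)=\rho-1\ne -\nu-1$; if $\rho=-\nu$ the top term cancels, so $\deg(d)<\rho-1=-\nu-1$. In both cases $\deg(d)\ne -\nu-1$, hence no $d$ of degree $-\nu-1$ is attainable and $\sind(L)\ge -\nu$. The main obstacle in the whole argument is the single inequality $\rho\ge N$: once the order of $r$ at infinity is controlled via Lemma~\ref{lm:denominators} and the forced polynomiality of $\hat f s$, the behaviour at infinity is a verbatim copy of the $\cS(f)=\emptyset$ computation in Lemma~\ref{lm:lowerbound}.
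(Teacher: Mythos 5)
Your proof is correct, but it is organized quite differently from the paper's. The paper first decomposes $f=-\Delta(f)'/\Delta(f)+q_2/M_2$ by partial fractions, passes to $\tilde{L}=L+\Delta(f)'/\Delta(f)$, and transfers stability indices at the level of operators: repeated use of Proposition~\ref{prop:reduction2} gives $\sind(\tilde{L})\leq \sind(L)+\deg(\Delta(f))$, after which the exact formula of Lemma~\ref{lm:lowerbound} is applied to $\tilde{L}$ as a black box; the bulk of the paper's proof is then a case analysis translating $\ord_\infty$ and $\res_\infty$ of $q_2/M_2$ back into data of $f$. You work instead at the level of a single solution pair $(r,d)$ with $L^*(r)=d$: Lemma~\ref{lm:denominators} lets you clear denominators, the identity $\hat{L}^*(r\Delta(f))=\Delta(f)\,d$ (your $\hat{L}$ is exactly the paper's $\tilde{L}$) together with $\den(\hat{f})\mid r\Delta(f)$ yields the pointwise inequality $-\ord_\infty(r)\geq N$, and you then re-run the leading-term analysis at infinity from the \emph{proof} of Lemma~\ref{lm:lowerbound} directly on $r$ and $f$. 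What your route buys: it bypasses Proposition~\ref{prop:reduction2} and Lemma~\ref{lm:solutionforrational} entirely, and since the analysis is phrased in terms of $\ord_\infty(f)$ and $\res_\infty(f)$ from the start, the paper's translation case analysis disappears; the cost is that Lemma~\ref{lm:lowerbound} cannot be cited as a statement and its argument must be repeated with $-\ord_\infty(r)$ in place of $\deg(r)$. The paper's route, conversely, reuses lemmas that serve elsewhere (e.g., in Theorem~\ref{thm:criterion}). One cosmetic caveat: your assertions of the form ``$\deg(d)=\rho-1$'' tacitly assume $\rho\geq 1$; when $\rho\leq 0$ the expression $-r'+fr$ vanishes at infinity, so no nonzero polynomial $d$ exists and those cases are vacuous — worth a sentence, but not a gap.
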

\begin{proof}
Set $M_1=\prod_{c\in \cS(f)}(x-c)$ and write $\den(f)=M_1M_2$. Then $\deg(M_2)=\deg(\den(f))-|\cS(f)|$. Since $\gcd(M_1,M_2)=1$, there exist $q_1,q_2\in C[x]$ such that
$
   f=q_1/M_1+q_2/M_2.
$
Since $M_1$ is square-free and $\res_c(q_1/M_1)=\res_c(f)$ for each $c\in \cS(f)$, one has that
$$
 \frac{q_1}{M_1}=\sum_{c\in \cS(f)} \frac{\res_c(f)}{x-c}=-\frac{\Delta(f)'}{\Delta(f)}.
$$
So $f=-\Delta(f)'/\Delta(f)+q_2/M_2$. Set $\tilde{L}=L+\Delta(f)'/\Delta(f)=D+q_2/M_2$. Using Proposition~\ref{prop:reduction2} repeatedly yields that $\sind(\tilde{L})\leq \sind(L)+\deg(\Delta(f))$. Since $\cS(q_2/M_2)=\emptyset$, by Lemma~\ref{lm:lowerbound}, one has that
if $\ord_{\infty}(q_2/M_2)\neq 1$ \mbox{or} $\res_\infty(q_2/M_2)\notin \bZ_{\leq -\deg(M_2)}$ then $\sind(\tilde{L})=\deg(M_2)-\min\{1,\ord_\infty(q_2/M_2)\}$, otherwise $\sind(\tilde{L})=-\res_\infty(q_2/M_2)$.

Note that $\ord_\infty(-\Delta(f)'/\Delta(f))=1$ and $\res_\infty(-\Delta(f)'/\Delta(f))=\deg(\Delta(f))$. Furthermore, one has that
\begin{align*}
   \res_\infty(f)&\geq \min\left\{\ord_\infty\left(-\frac{\Delta(f)'}{\Delta(f)}\right),\ord_\infty\left(\frac{q_2}{M_2}\right)\right\}\\
   &=\min\left\{1,\ord_\infty\left(\frac{q_2}{M_2}\right)\right\}
\end{align*}
and the equality holds if $\ord_\infty(q_2/M_2)\neq 1$.
If $\ord_\infty(f)<1$ then $\ord_\infty(q_2/M_2)=\ord_\infty(f)<1$ and so
\begin{align*}
    \sind(L)& \geq \sind(\tilde{L})-\deg(\Delta(f))\\
    &=\deg(M_2)-\min\{1,\ord_\infty(q_2/M_2)\}-\deg(\Delta(f))\\
    &=N-\min\{1,\ord_\infty(f)\}.
\end{align*}
Suppose that $\ord_\infty(f)>1$. Then $\ord_\infty(q_2/M_2)=1$ and moreover $\res_\infty(q_2/M_2)=-\res_\infty(-\Delta(f)'/\Delta(f))=-\deg(\Delta(f))$. Assume that $\deg(\Delta(f))\geq \deg(M_2)$. Then $\res_\infty(q_2/M_2)\in \bZ_{\leq -\deg(M_2)}$ and so
$\sind(\tilde{L})=-\res_\infty(q_2/M_2)=\deg(\Delta(f)).$
Thus $\sind(L)\geq 0$. On the other hand, in this case,
$$
  N-\min\{1,\ord_\infty(f)\}=\deg(M_2)-\deg(\Delta(f))-1<0\leq \sind(L).
$$
Assume that $\deg(\Delta(f))<\deg(M_2)$. Then we have $\res_\infty(q_2/M_2)\notin \bZ_{\leq -\deg(M_2)}$. So
$\sind(\tilde{L})=\deg(M_2)-1$ and $\sind(L)\geq \sind(\tilde{L})-\deg(\Delta(f))=N-1$. Since $\ord_\infty(f)>1$, we have $N-1= N-\min\{1,\ord_\infty(f)\}$ and then the assertion holds.

Note that $\nu=\res_\infty(f)=\res_\infty(-\Delta(f)'/\Delta(f))+\res_\infty(q_2/M_2)$ and $\res_\infty(-\Delta(f)'/\Delta(f))=\deg(\Delta(f))$. This implies that $\nu\notin \bZ_{\leq -N}$ if and only if $\res_\infty(q_2/M_2)\notin \bZ_{\leq -\deg(M_2)}$. Suppose that $\ord_\infty(f)=1$ and $\nu\notin \bZ_{\leq -N}$. Then $\ord_\infty(q_2/M_2)\geq 1$ and $\sind(\tilde{L})=\deg(M_2)-1$. Thus $\sind(L)\geq \sind(\tilde{L})-\deg(\Delta(f))\geq N-1$. The assertion then follows from the fact that $\min\{1,\ord_\infty(f)\}=1$. Finally, assume that $\ord_\infty(f)=1$ and $\nu\in \bZ_{\leq -N}$. Then $\res_\infty(q_2/M_2)\in \bZ_{\leq -\deg(M_2)}$. Hence $\sind(\tilde{L})=-\res_\infty(q_2/M_2)$ and then $\sind(L)\geq -\res_\infty(q_2/M_2)-\deg(\Delta(f))=-\nu$.
\end{proof}
\section{Stable first order operators}\label{sec:hyperexp}
Any solution of a first order operator over $k$ is called a hyperexponential function over $k$.
A first order operator is stable if and only if its corresponding hyperexponential solution is stable.
The stability problem  {for} hyperexponential functions of the form $f\exp(g)$ with $f, g\in k$ has been
solved in~\cite{chen2022ISSAC}.
In this section, we shall solve the stability problem  {for} general hyperexponential functions by presenting
a necessary and sufficient condition on the stability of first order operators.
\begin{thm}
\label{thm:criterion}
Suppose that $L=D+f$ where $f\in C(x)$. Then $L$ is stable if and only if $f$ admits one of the following forms:
$$
    -\frac{h'}{h}+\alpha \,\,\mbox{or}\,\,-\frac{h'}{h}+\frac{\beta}{x-c}, \,\,h\in C[x]\setminus\{0\}, \alpha,\beta,c\in C
$$
and $\beta$ is not a positive integer.
\end{thm}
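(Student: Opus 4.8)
The plan is to prove both directions by reducing to operators whose stability is already settled: the constant-coefficient operators of Example~\ref{exam:constantcoefficients}, the single-pole operators $D+\beta/(x-c)$ of Example~\ref{exam:onepole}, and the operators with $\cS=\emptyset$ governed by the exact formula of Lemma~\ref{lm:lowerbound}. For the sufficiency direction I would argue directly. If $f=-h'/h+\alpha$ with $h\in C[x]\setminus\{0\}$, then $D+\alpha=p(D)$ with $p(z)=z+\alpha$ of positive degree, so $D+\alpha$ is stable by Example~\ref{exam:constantcoefficients}; applying Proposition~\ref{prop:reduction1} with this $h$ shows $(D+\alpha)-h'/h=D+f$ is stable. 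If instead $f=-h'/h+\beta/(x-c)$ with $\beta$ not a positive integer, then $D+\beta/(x-c)$ is stable by Example~\ref{exam:onepole}, and Proposition~\ref{prop:reduction1} again yields that $(D+\beta/(x-c))-h'/h=D+f$ is stable. Thus both listed families consist of stable operators, and this direction is essentially immediate from the reduction lemma.

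The substance lies in the converse, so assume $L=D+f$ is stable. The key idea is to strip off the part of $f$ responsible for the set $\cS(f)$ of simple poles with negative-integer residue. Following the decomposition in the proof of Proposition~\ref{prop:lowerbound}, write
$$
   f=-\frac{\Delta(f)'}{\Delta(f)}+g,\qquad \cS(g)=\emptyset,
$$
where $\Delta(f)=\prod_{c\in\cS(f)}(x-c)^{-\res_c(f)}$. Since $\Delta(f)'/\Delta(f)=\sum_{c\in\cS(f)}(-\res_c(f))/(x-c)$ with each $-\res_c(f)$ a positive integer, the operator $\tilde L=D+g=L+\Delta(f)'/\Delta(f)$ is obtained from $L$ by adding $1/(x-c)$ exactly $-\res_c(f)$ times for each $c\in\cS(f)$. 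At every intermediate step the relevant point is still a simple pole, because its residue only reaches $0$ after the final addition, so Proposition~\ref{prop:reduction2} applies at each step and propagates stability from $L$ to $\tilde L$. Hence $\tilde L=D+g$ is stable with $\cS(g)=\emptyset$.

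Next I would invoke the exact formula of Lemma~\ref{lm:lowerbound} and solve $\sind(\tilde L)=0$. Setting $N=\deg(\den(g))$, the formula forces $N\le 1$. If $g$ is a polynomial ($N=0$), then $\sind(\tilde L)=\deg(g)$, which vanishes exactly when $g$ is constant. If $g$ has a pole ($N\ge 1$), then $\sind(\tilde L)=0$ forces $N=1$ together with $\ord_\infty(g)\ge 1$, whence $g=\beta/(x-c')$ for some $c'$; the residue-at-infinity condition in Lemma~\ref{lm:lowerbound}, combined with $\cS(g)=\emptyset$, then forces $\beta$ to be a non-integer (in particular not a positive integer). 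Translating back through $f=-h'/h+g$ with $h=\Delta(f)\in C[x]\setminus\{0\}$ gives either $f=-h'/h+\alpha$ or $f=-h'/h+\beta/(x-c')$ with $\beta$ not a positive integer, which are precisely the two asserted forms.

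I expect the main obstacle to be the reduction to the case $\cS(g)=\emptyset$: one must identify the correct logarithmic-derivative correction $\Delta(f)'/\Delta(f)$, verify that adding $1/(x-c)$ one unit of residue at a time keeps each point a genuine pole so that Proposition~\ref{prop:reduction2} may be applied repeatedly, and check that no new elements of $\cS$ are created along the way. Once $\cS(g)=\emptyset$ is achieved, the remaining case analysis is a direct, if slightly fiddly, reading of Lemma~\ref{lm:lowerbound}; there the only delicate point is tracking the residue-at-infinity convention carefully enough to land on the condition that $\beta$ is not a positive integer.
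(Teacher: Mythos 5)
Your proof follows the paper's own route almost verbatim: sufficiency via Proposition~\ref{prop:reduction1} together with Example~\ref{exam:onepole} (the paper cites Proposition~\ref{prop:katz} rather than Example~\ref{exam:constantcoefficients} for $D+\alpha$, an immaterial difference), and necessity via the decomposition $f=-\Delta(f)'/\Delta(f)+g$ with $\cS(g)=\emptyset$, repeated application of Proposition~\ref{prop:reduction2}, and the case analysis of Lemma~\ref{lm:lowerbound}, which correctly yields $g=\alpha\in C$ or $g=\beta/(x-c')$. Up to that point your argument and the paper's coincide.

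The gap is in your final step, the one you yourself flag as ``delicate'': you claim that the residue-at-infinity clause of Lemma~\ref{lm:lowerbound}, together with $\cS(g)=\emptyset$, forces $\beta$ not to be a positive integer. Under the paper's stated convention $\res_\infty(g)=a_1$ (the coefficient of $1/x$ in the expansion at infinity), this is false: for $g=\beta/(x-c')$ one has $\res_\infty(g)=\beta$ and $N=1$, so the exceptional clause ``$\nu\in\bZ_{\leq -N}$'' holds precisely when $\beta$ is a negative integer, which is exactly what $\cS(g)=\emptyset$ already excludes; read literally, the lemma then reports $\sind(\tilde L)=N-1=0$ for every other $\beta$, positive integers included, and your step yields no constraint whatsoever. (The literal reading is in fact untenable on its own terms: it would make $D+1/x$ stable, contradicting Corollary~\ref{cor:stabilityindexfirstorder}; inside the proof of Lemma~\ref{lm:lowerbound} the authors actually work with $\nu=-a_1$, and only under that sign convention does your deduction go through.) This is precisely why the paper does not extract the exclusion of positive integer $\beta$ from Lemma~\ref{lm:lowerbound} at all, but gives a separate direct argument that your proposal is missing: if $\beta\in\bZ_{>0}$, then stability and Proposition~\ref{prop:propertyofstable} give $r\in k$ with $\tilde L^*(r)=(x-c)^{\beta-1}$; Lemma~\ref{lm:denominators} with $\Delta(\tilde f)=1$ forces $r\in C[x]$; and the resulting equation $-r'(x-c)+\beta r=(x-c)^{\beta}$ is impossible by degree count, since its left-hand side has degree $\deg(r)$ when $\deg(r)\neq\beta$ and degree at most $\beta-1$ when $\deg(r)=\beta$. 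Either import this argument, or fix the sign convention explicitly and re-verify Lemma~\ref{lm:lowerbound} under it; as written, the exclusion of positive integer $\beta$ --- the only part of the theorem not already contained in the earlier lemmas --- is not established.
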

\begin{proof}
Suppose that $f$ admits one of the above forms.
By Proposition~\ref{prop:reduction1}, it suffices to show that $D+\alpha$ and $D+\beta/(x-c)$ are stable. The case $D+\alpha$ follows from Proposition~\ref{prop:katz} and the case $D+\beta/(x-c)$ follows from Example~\ref{exam:onepole}.

Now suppose that $L$ is stable.
Set $h=\Delta(f)$ and $\tilde{f}=f+h'/h$. If $\tilde{f}=0$ then there is nothing to prove. Suppose that $\tilde{f}\neq 0$. Then there is no simple pole of $\tilde{f}$ at which the residue of $\tilde{f}$ is a negative integer, i.e. $\cS(\tilde{f})=\emptyset$ and $\Delta(\tilde{f})=1$. Using Proposition~\ref{prop:reduction2} repeatedly, one has that $\tilde{L}=L+h'/h$ is stable. Note that $\tilde{L}=D+\tilde{f}$.

 Set $\tilde{f}_1=\num(\tilde{f}_1)$ and $\tilde{f}_2=\den(\tilde{f})$.  {Note that $\cS(\tilde{f})=\emptyset$. If $\ord_\infty(\tilde{f})<0$ then $\deg(\tilde{f}_1)>0$, and by Lemma~\ref{lm:lowerbound} one has that
 $$\sind(\tilde{L})\geq \deg(\tilde{f}_2)-\ord_\infty(\tilde{f})=\deg(\tilde{f}_1)>0$$
 a contradiction.
 If $\ord_\infty(\tilde{f})>1$ then $\deg(\tilde{f}_2)>1$ and $\sind(\tilde{L})\geq \deg(\tilde{f}_2)-1>0$ by Lemma~\ref{lm:lowerbound}. One obtains a contradiction again.
 Therefore $\ord_\infty(\tilde{f})=0$ or $\ord_\infty(\tilde{f})=1$.}

  {Suppose that $\ord_\infty(\tilde{f})=0$. Then by Lemma~\ref{lm:lowerbound} again, $0=\sind(\tilde{L})=\deg(\tilde{f}_1)$. This implies that $\deg(\tilde{f}_2)=0$ for $\ord_\infty(\tilde{f})=0$.
 Hence one has that $\alpha=\tilde{f}\in C\setminus\{0\}$ and then $f=-h'/h+\alpha$. Now we consider the case $\ord_\infty(\tilde{f})=1$. Then $\deg(\tilde{f}_2)\geq 1$ and $\res_\infty(\tilde{f})\neq 0$.  If $\res_\infty(\tilde{f})\in \bZ_{\leq -\deg(\tilde{f}_2)}$ then
 by Lemma~\ref{lm:lowerbound}, $\sind(\tilde{L})=-\res_\infty(\tilde{f})\geq \deg(\tilde{f}_2)\geq 1$, a contradiction. So $\res_\infty(\tilde{f})\notin \bZ_{\leq -\deg(\tilde{f}_2)}.$
 In this case, by Lemma~\ref{lm:lowerbound} again, one sees that
 $$
  0=\sind(\tilde{L})=\deg(\tilde{f}_2)-1.
$$
 Therefore $\deg(\tilde{f}_2)=1$ and $\deg(\tilde{f}_1)=0$.
Set $\beta=\tilde{f}_1$ and write $\tilde{f}_2=x-c$ for some $c\in C$. Then $f=-h'/h+\beta/(x-c)$.}

 It remains to show that $\beta$ is not a positive integer. Assume on the contrary that $\beta$ is a positive integer. Since $\tilde{L}$ is stable, by Proposition~\ref{prop:propertyofstable}, there is an $r\in k$ such that $\tilde{L}^*(r)=(x-c)^{\beta-1}$. As $\Delta(\tilde{f})=1$, Lemma~\ref{lm:denominators} implies that $r$ is a polynomial. From $\tilde{L}^*(r)=(x-c)^{\beta-1}$, one has that
 $$
    -r'(x-c)+r\beta=(x-c)^{\beta}.
 $$
 Write $r=\alpha x^s+\dots$, where $s=\deg(r)$. Then
 $$
    -\alpha(s-\beta)x^s+\mbox{lower terms}=(x-c)^\beta.
 $$
 Comparing the degrees of both sides of the above equality, one sees that there does not exist any integer $s$ such that the above equality holds, a contradiction. Therefore $\beta$ is not a positive integer.
\end{proof}
\begin{exam}
\label{exam:radicalfunctions}
 { Let $L=D-\frac{1}{2x}-\frac{1}{2(x+1)}$. Then $L$ is the defining operator of $\sqrt{x(x+1)}$. By Theorem~\ref{thm:criterion}, $L$ is not stable. Notice that the method presented in \cite{chen2022ISSAC} is not valid for this example.}
\end{exam}
\section{Iterated integration of special rational functions} \label{sec:rationalfunctions}
In the last section, we shall consider the iterated integrals of the rational function $1/q$, where $q$ is a nonzero polynomial in $C[x]$.
Write $q=\lc(q)\prod_{i=1}^n (x-c_i)^{m_i}$. Then the first order operator with $1/q$ as a fundamental solution is of the form
\begin{equation}
\label{eqn:firstorderoperators1}
L=D+\sum_{i=1}^n \frac{m_i}{x-c_i}.
\end{equation}
\begin{lem}
\label{lm:firstorderequivalence}
Suppose that $L=D-f'/f$ where $f\in k\setminus\{0\}$. Then for a nonzero $d\in C[x]$, $\delta(L,d)=1$ if and only if $df$ is rationally integrable.
\end{lem}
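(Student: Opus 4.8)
The plan is to reduce everything to a single application of the product rule, using the special shape $L = D - f'/f$. First I would compute the adjoint explicitly. Writing $L = D + a_0$ with $a_0 = -f'/f$, the definition $L^* = \sum_i (-D)^i a_i$ gives $L^* = -D - f'/f$, so that for $r \in k$ the equation $L^*(r) = d$ reads
\[
   -r' - \frac{f'}{f}\, r = d.
\]
The key observation, and really the only idea in the proof, is that multiplying this identity by $f \neq 0$ produces a perfect derivative: $f\cdot L^*(r) = -r'f - rf' = -(rf)'$. Hence $L^*(r) = d$ is equivalent to $-(rf)' = df$, i.e. to $df = D(-rf)$.

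With this identity in hand the two directions are immediate. If $\delta(L,d)=1$, pick a rational solution $r$ of $L^*(y)=d$; then $h := -rf \in k$ satisfies $D(h) = df$, so $df$ is rationally integrable. Conversely, if $df = D(h)$ for some $h \in k$, set $r := -h/f$; since $k$ is a field and $f \neq 0$ we have $r \in k$, and reversing the computation ($(rf)' = (-h)' = -df$, then dividing by $f$) shows $L^*(r) = d$, so $\delta(L,d)=1$. In fact the map $r \mapsto -rf$ is a bijection between rational solutions of $L^*(y)=d$ and rational primitives of $df$, which makes the equivalence transparent.

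I do not anticipate any genuine obstacle here: the statement is essentially a repackaging of the product rule adapted to the logarithmic-derivative form of $L$. The only point that deserves an explicit (but trivial) remark is that the correspondence $r \leftrightarrow h = -rf$ stays inside $k$ in both directions, which holds precisely because $k$ is a field and $f$ is a nonzero element of $k$; no solution is lost or gained in passing between the two equations. Consequently the whole argument is a short direct computation rather than anything requiring the heavier machinery (Lemma~\ref{lm:rationallyintegrable}, Corollary~\ref{cor:criterion}) developed earlier.
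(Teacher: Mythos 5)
Your proof is correct and follows essentially the same route as the paper's: both multiply the adjoint equation $L^*(r) = -r' - (f'/f)r = d$ by $f$ to recognize the left-hand side as $-(rf)'$, so that rational solutions of $L^*(y)=d$ correspond exactly to rational primitives of $df$. Your explicit treatment of both directions via the bijection $r \mapsto -rf$ is slightly more detailed than the paper's one-line ``in other words,'' but the underlying idea is identical.
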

\begin{proof}
For $r\in k \setminus \{0\}$, one has that
$$
   L^*(r)=-r'-\frac{f'}{f}r=-\frac{f'r+fr'}{f}=-\frac{1}{f}\left(rf\right)'.
$$
So $L^*(r)=d$ if and only if $(-fr)'=df$. In other words, $\delta(L,d)=1$ if and only if $df$ is rationally integrable.
\end{proof}
The following lemma may seem folklore, but we did not find any precise reference. So we include its full proof
for the sake of completeness.

\begin{lem}
\label{lm:notrationalintegrable}
Suppose that $q= c_0 \prod_{i=1}^n (x-c_i)^{m_i}$ with $c_i\in C$, $m_i, n>1$ and $p\in C[x]$ is a nonzero polynomial. If either $\deg(p)<n-1$ or $\deg(p)=\deg(q)-1$, then $p/q$ is not rationally integrable.
\end{lem}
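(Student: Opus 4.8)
The plan is to analyze when $p/q$ has a rational primitive by examining its partial fraction decomposition. Recall the classical fact that a rational function $g\in k$ is rationally integrable if and only if $g$ has zero residue at every pole in $C$, equivalently, $g$ can be written as $(\text{polynomial part that integrates to a polynomial}) + (\text{a sum } \sum \frac{a_j}{(x-c)^j} \text{ with all } j\geq 2)$; the only obstruction to integrability is a nonzero $\frac{a}{x-c}$ simple-pole term, whose primitive $a\log(x-c)$ leaves the field $k$. So I would reduce the lemma to showing that under either hypothesis on $\deg(p)$, the partial fraction expansion of $p/q$ necessarily has a nonzero residue at some $c_i$, i.e.\ some simple-pole term survives.

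First I would write $p/q = \sum_{i=1}^n \sum_{j=1}^{m_i} \frac{a_{i,j}}{(x-c_i)^j}$ (no polynomial part, since $\deg p \le \deg q - 1 < \deg q$), and observe by Lemma~\ref{lm:firstorderequivalence}-style reasoning that $p/q$ is rationally integrable if and only if the residues $a_{i,1}$ vanish for all $i$. So the goal becomes: under either degree hypothesis, not all residues $a_{i,1}$ can vanish. The residue at $c_i$ is $\res_{c_i}(p/q)=\frac{p(c_i)}{q'(c_i)}$ when $m_i=1$, and more generally a suitable derivative expression when $m_i>1$; but the cleaner route is to argue by contradiction.

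For the case $\deg(p)<n-1$: suppose all residues $a_{i,1}=0$. Then $p/q=\sum_{i,j\ge 2}\frac{a_{i,j}}{(x-c_i)^j}$, so $(p/q)' $ has no simple poles and $p/q = D(h)$ for some $h\in k$ whose denominator is $\prod (x-c_i)^{m_i-1}$ up to constants. Clearing denominators turns the vanishing-residue condition into $n$ linear constraints on the coefficients of $p$; equivalently, $p$ must lie in the $C$-span of $\{q/(x-c_i) \cdot(\text{stuff})\}$, and I would count dimensions. The space of $p$ with $\deg p < n-1$ has dimension $n-1$, while requiring all $n$ residues to vanish is (generically) $n$ independent conditions; the sharper statement is that the residues $a_{1,1},\dots,a_{n,1}$ of functions of the form $p/q$ with $\deg p<n-1$ cannot all vanish unless $p=0$, because the residue map $p\mapsto(a_{1,1},\dots,a_{n,1})$ restricted to this space is injective. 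I would prove injectivity via a Vandermonde-type / interpolation argument: if all simple residues vanish for a nonzero $p$ with $\deg p<n-1$, then $p$ is divisible by enough factors to force $\deg p\ge n-1$, a contradiction. For the case $\deg(p)=\deg(q)-1$: here I would use the residue-at-infinity relation, namely $\sum_{i} a_{i,1} = $ (the leading behavior), together with the fact that $\deg p=\deg q-1$ forces $\res_\infty(p/q)\neq 0$; since the sum of all residues (including at $\infty$) is zero, the finite residues cannot all vanish.

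The main obstacle I expect is handling the higher-order poles ($m_i>1$) cleanly, since the residue is no longer simply $p(c_i)/q'(c_i)$ and the dimension count must account for the full denominator $\prod(x-c_i)^{m_i}$ rather than a squarefree one. The cleanest fix will likely be to phrase everything through the residue map and the global residue theorem $\sum_{c\in C\cup\{\infty\}}\res_c(p/q)=0$, so that the two hypotheses become, respectively, an interpolation/injectivity statement about low-degree numerators and a nonvanishing-at-infinity statement; combining these with Lemma~\ref{lm:firstorderequivalence} (integrability $\iff$ all finite residues vanish) yields the claim.
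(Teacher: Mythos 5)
Your treatment of the case $\deg(p)=\deg(q)-1$ is correct and coincides with the paper's: $\res_\infty(p/q)\neq 0$, the residues of a rational function (including the one at infinity) sum to zero, hence some finite residue is nonzero and $p/q$ is not rationally integrable. The gap is in the case $\deg(p)<n-1$. Your reduction there is fine --- $p/q$ is proper, so integrability is equivalent to the vanishing of all finite residues, and the lemma becomes the claim that the residue map is injective on numerators of degree at most $n-2$ --- but that injectivity \emph{is} the content of the lemma, and the mechanism you propose for it fails. Vanishing residues impose no divisibility on $p$ once some $m_i>1$: the functional $p\mapsto\res_{c_i}(p/q)$ equals $p(c_i)/q'(c_i)$ (a point evaluation) only when $m_i=1$; for $m_i\geq 2$ it involves derivatives of $p$. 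Concretely, take $q=(x-1)^2(x+1)^2$ and $p=(x-1)^2+(x+1)^2$: then $p/q=\frac{1}{(x-1)^2}+\frac{1}{(x+1)^2}$ has all finite residues zero (it is even rationally integrable), yet $p(1)\neq 0$ and $p(-1)\neq 0$, so ``all simple residues vanish $\Rightarrow p$ is divisible by enough factors'' is false. Your proposed fallback, the global residue theorem, is vacuous in this regime: when $\deg(p)<n-1\leq \deg(q)-2$ the residue at infinity is automatically zero, so $\sum_c \res_c(p/q)=0$ gives no constraint; and the dimension count ($n-1$ unknowns versus $n$ conditions) only shows the kernel is \emph{generically} trivial, not that it is trivial for the given $c_i$ and $m_i$.

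What is missing is a lower bound on $\deg(p)$ for any integrable $p/q$, and here the paper argues quite differently. It writes $p/q=r'$ with $r=r_1/r_2$ in lowest terms; a local order computation at each pole gives $\ord_{c_i}(r_2)\leq m_i-1$, hence $q=\tilde{q}\,r_2\prod_{i=1}^n(x-c_i)$ for some nonzero polynomial $\tilde{q}$. Clearing denominators yields $\tilde{q}\,(r_1'r_2-r_1r_2')\prod_{i=1}^n(x-c_i)=p\,r_2$, and the elementary estimate $\deg(r_1'r_2-r_1r_2')\geq\deg(r_2)-1$ (split into the cases $\deg r_1\neq\deg r_2$ and $\deg r_1=\deg r_2$, using $\gcd(r_1,r_2)=1$) makes the left-hand side have degree at least $\deg(r_2)-1+n$, while the right-hand side has degree $\deg(p)+\deg(r_2)<\deg(r_2)+n-1$, a contradiction. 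Some argument of this kind --- a degree inequality on numerators of derivatives, rather than an interpolation or divisibility statement about residues --- is what your case $\deg(p)<n-1$ still needs.
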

\begin{proof}
Assume that $\deg(p)<n-1$ and assume on the contrary that $p/q$ is rationally integrable, i.e. there is an $r\in C(x)$ such that $r'=p/q$. Set $r_1=\num(r)$ and $r_2=\den(r)$. We claim that $q=\tilde{q}r_2\prod_{i=1}^n(x-c_i)$ for some nonzero $\tilde{q}\in C[x]$. Note that $c\in C$ is a pole of $r_1/r_2$ if and only if $c$ is a pole of $r$. Hence we may write $r_2=\prod_{i=1}^n (x-c_i)^{l_i}$ where $l_i>0$. For each $1\leq i \leq n$, $\ord_{c_i}(r')=\ord_{c_i}(r)-1=-l_i-1$ and $\ord_{c_i}(p/q)\geq -m_i$. These imply that $m_i\geq l_i+1$ for all $1\leq i \leq n$. So $q=\tilde{q}r_2\prod_{i=1}^n(x-c_i)$ and our claim holds. Now from $r'=p/q$, one has that $q(r_1'r_2-r_1r_2')=pr_2^2$. Cancelling $r_2$ from both sides yields that
\begin{equation}
\label{eqn:form1}
     \tilde{q}(r_1'r_2-r_1r_2')\prod_{i=1}^n(x-c_i)=pr_2.
\end{equation}
If $\deg(r_1)\neq \deg(r_2)$ then $\deg(r_1'r_2-r_1r_2')=\deg(r_1)+\deg(r_2)-1$. Suppose that $\deg(r_1)=\deg(r_2)$. Write $r_1=cr_2+h$ where $c\in C$ and $h\in C[x]$ with $\deg(h)<\deg(r_2)$. Since $\gcd(r_1,r_2)=1$, $h\neq 0$. One sees that $r_1'r_2-r_1r_2'=h'r_2-hr_2'$ and then $\deg(r_1'r_2-r_1r_2')=\deg(h'r_2-hr_2')=\deg(h)+\deg(r_2)-1$. In either case, one sees that $\deg(r_1'r_2-r_1r_2')\geq \deg(r_2)-1$. Then the degree of the left-hand side of (\ref{eqn:form1}) is not less than $\deg(r_2)-1+n$. However the degree of the right-hand side of (\ref{eqn:form1}) equals $\deg(p)+\deg(r_2)$ that is less than $\deg(r_2)+n-1$. We obtain a contradiction. Thus $p/q$ is not rationally integrable.

Finally suppose that $\deg(p)=\deg(q)-1$. Then the residue of $p/q$ at infinity is not zero because $\ord(x^2p/q)=-1$. It is well-known that a rational function is rationally integrable if and only if all of its residues vanish. So $p/q$ is not rationally integrable.
\end{proof}

\begin{prop}
\label{prop:orders}
Let $L$ be of the form (\ref{eqn:firstorderoperators1}) and let $L_j$ be a $j$-th principal integral of $L$. Then
\[
    \ord(L_j)=\begin{cases}
       \ord(L) + j, &0\leq j\leq n-1   \\
       \ord(L) + n - 1, &n-1<j\leq \sum_{i=1}^{n}m_i-1\\
       \ord(L) + n, &j\geq \sum_{i=1}^{n}m_i
    \end{cases}.
\]
\end{prop}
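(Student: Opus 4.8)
The plan is to reduce the order computation to a purely residue-theoretic question about $1/q$, and then to pin down the finitely many ``gap'' degrees by a counting argument rather than by constructing integrable numerators of every degree. Since $\sum_{i=1}^n \frac{m_i}{x-c_i} = q'/q = -(1/q)'/(1/q)$, we have $L = D - f'/f$ with $f = 1/q$, so Lemma~\ref{lm:firstorderequivalence} tells us that for nonzero $d\in C[x]$, $\delta(L,d)=1$ iff $d/q$ is rationally integrable. Feeding this into Corollary~\ref{cor:criterion}, and recalling from Definition~\ref{defn:principalintegralofoperator} that the order rises by $1$ at step $i$ exactly when $\delta(L_i)=0$, I would set $A = \{s\ge 0 : \exists\, d\in C[x],\ \deg d = s,\ d/q \text{ rationally integrable}\}$; telescoping from $L_0=L$ (with $\ord(L)=1$) then gives $\ord(L_j)=\ord(L)+\#\{0\le s\le j-1 : s\notin A\}$, where the count is choice-independent by Lemma~\ref{lm:orderofprincipalintegras}. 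Thus the whole proposition reduces to showing $A^{c}=\{0,1,\dots,n-2\}\cup\{N-1\}$, where $N=\deg q=\sum_i m_i$.

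To locate $A^c$ I would use that a rational function is rationally integrable iff all its residues vanish. With $\lambda_i(d):=\res_{c_i}(d/q)$, the space $W=\{d : d/q\text{ integrable}\}$ (a $C$-vector space by Proposition~\ref{prop:propertyofstable}) equals $\bigcap_i\ker\lambda_i$, and $s\in A^c$ precisely when the rank of $\lambda_1,\dots,\lambda_n$ on $\{1,x,\dots,x^{s}\}$ strictly exceeds their rank on $\{1,\dots,x^{s-1}\}$. The functionals are linearly independent: for $d=q/(x-c_i)$ one has $d/q=1/(x-c_i)$, so $\lambda_i(d)=1$ and $\lambda_j(d)=0$ for $j\ne i$. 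Hence the rank climbs from $0$ to $n$ in unit steps, so $A^c$ has \emph{exactly} $n$ elements. It therefore suffices to exhibit $n$ concrete gaps.

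The gaps split into two families. First, $N-1\in A^c$: for $\deg d=N-1=\deg q-1$ the residue of $d/q$ at infinity is $-\lc(d)/\lc(q)\ne 0$, so $d/q$ is not integrable (this is the last step of Lemma~\ref{lm:notrationalintegrable}). Second, $\{0,\dots,n-2\}\subseteq A^c$: if $d\ne 0$, $\deg d\le n-2$ and $d/q=r'$, choose the antiderivative $r$ vanishing at infinity and write $r=a/b$ in lowest terms with $b=\prod_i(x-c_i)^{l_i}$; comparing orders at each $c_i$ forces $l_i\le m_i-1$, whence $\deg b\le\sum_i(m_i-1)=N-n$, while comparing orders at infinity gives $\deg a=\deg b-N+\deg d+1\le \deg d-n+1\le -1$, contradicting $a\ne 0$. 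These supply $(n-1)+1=n$ elements of $A^c$, which by the count above is all of it. Substituting $A^c=\{0,\dots,n-2\}\cup\{N-1\}$ into the order formula and counting which of $0,\dots,j-1$ lie in $A^c$ across the ranges $j\le n-1$, $\ n-1<j\le N-1$, $\ j\ge N$ yields the three stated cases.

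The main obstacle is the second family of gaps in full generality, i.e.\ showing that \emph{every} degree below $n-1$ is a gap even when some $m_i=1$. Lemma~\ref{lm:notrationalintegrable} only covers $m_i,n>1$ (its argument breaks when a simple pole $c_i$ need not be a pole of the antiderivative $r$), so I expect the order-at-infinity estimate above, resting on the denominator bound $\deg b\le N-n$, to be the delicate step. It is precisely the counting identity $\abs{A^c}=n$ that then frees me from having to construct an integrable numerator of each admissible degree $n-1,\dots,N-2$ and $\ge N$ separately.
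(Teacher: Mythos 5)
Your proof is correct, and it takes a genuinely different route from the paper's on the core question, namely determining the set of ``gap'' degrees. After the same reduction via Lemma~\ref{lm:firstorderequivalence} and Corollary~\ref{cor:criterion}, the paper handles every degree range by a direct argument: it invokes Lemma~\ref{lm:notrationalintegrable} for the gaps at degrees $<n-1$ and at degree $N-1$, and then \emph{constructs} an integrable numerator for each non-gap degree --- for $n-1\le j\le N-2$ by taking $d=q\,r'$ with $r=\prod_{i=1}^{t}(x-c_i)^{-l_i}$ and $\sum_i l_i=N-1-j$, and for $j\ge N$ by computing $L^*(qx^{j-N+1})$, a polynomial of degree $j$. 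Your rank-counting argument with the residue functionals $\lambda_i=\res_{c_i}(\,\cdot\,/q)$ makes both constructions unnecessary: linear independence (witnessed by the polynomials $q/(x-c_i)$) together with the unit-step growth of the rank on $\{1,x,\dots,x^s\}$ gives $|A^{c}|=n$ exactly, so exhibiting the $n$ gaps $\{0,\dots,n-2\}\cup\{N-1\}$ finishes the proof. What the paper's constructions buy is explicit certificates, i.e., closed-form rational solutions of $L^*(y)=p_j$ for every non-gap degree $j$, which has algorithmic value; what your counting buys is brevity and, more importantly, rigor in the gap part: your self-contained pole/degree estimate for degrees $\le n-2$ (choosing the antiderivative vanishing at infinity, bounding the denominator by $\deg b\le N-n$, and comparing orders at infinity) is valid for arbitrary $m_i\ge 1$ and $n\ge 1$, whereas Lemma~\ref{lm:notrationalintegrable} is stated only under the hypotheses $m_i,n>1$, and the paper's proof of Proposition~\ref{prop:orders} applies it outside those hypotheses, since the $m_i$ in (\ref{eqn:firstorderoperators1}) are arbitrary positive integers. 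So your route not only proves the statement but also repairs this small gap in the paper's own argument.
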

\begin{proof}
Note that $\frac{q'}{q}=\sum_{i=1}^n \frac{m_i}{x-c_i}$. Due to Lemmas~\ref{lm:firstorderequivalence} and~\ref{lm:notrationalintegrable},
$\delta(L,d)=0$ for any nonzero polynomial $d$ of degree less than $n-1$. By Corollary~\ref{cor:criterion}, $\ord(L_{j+1})=\ord(L_j)+1$ for all $0\leq j <n-1$. Thus $\ord(L_j)=\ord(L)+j$ for all $0\leq j \leq n-1$.

Set $N=\sum_{i=1}^n m_i$. Assume that $N>n$. Then there is at least one $m_i$ that is greater than 1. Without loss of generality, we assume that all of $m_1,\dots,m_s$ are greater than 1 and $m_{s+1}=\dots=m_n=1$. Suppose that $j$ is an integer satisfying that $n-1\leq j<N -1$.
Then $\sum_{i=1}^s (m_i-1)=\sum_{i=1}^n (m_i-1)\geq N-1-j$. Hence there exist integers $t, l_1,\dots,l_t$ satisfying that
$1\leq t \leq s, 0<l_i<m_i$ and $\sum_{i=1}^t l_i=N-1-j$. Set $r=1/\prod_{i=1}^t (x-c_i)^{l_i}$. Then
$$
    \left(\frac{1}{r}\right)'=\frac{h}{\prod_{i=1}^t (x-c_i)^{l_i+1}}
$$
where $h$ is a nonzero polynomial of degree $t-1$. Let $\tilde{d}=q (1/r)'$. Then $\tilde{d}$ is a nonzero polynomial because $l_i+1\leq m_i$, and
$$
  \deg(\tilde{d})=\deg(h)+\deg(q)-\sum_{i=1}^t(l_i+1)=j.
$$
As $(1/r)'=\tilde{d}/q$, Lemma~\ref{lm:firstorderequivalence} implies that $\delta(L,\tilde{d})=1$. Due to Corollary~\ref{cor:criterion} again, $\ord(L_{j+1})=\ord(L_j)$ for all $n-1\leq j < N-1$. Thus $\ord(L_j)=\ord(L_{n-1})=\ord(L)+n-1$ for all $n-1<j\leq N-1$.

By Lemma~\ref{lm:notrationalintegrable}, we have that $p/q$ is not rationally integrable if $\deg(p)=\deg(q)-1$. Therefore $\delta(L,d)=0$ for any $d\in C[x]$ of degree $N-1$. Corollary~\ref{cor:criterion} then implies that $\ord(L_N)=\ord(L_{N-1})+1=\ord(L)+n$. For each $j\geq N$, one has that
$$
   L^*(qx^{j-N+1})=(j-N+1)q x^{j-N}.
$$
In other words, $L^*(qx^{j-N+1})$ is a polynomial of degree $j$. By Corollary~\ref{cor:criterion}, $\ord(L_{j+1})=\ord(L_j)$ for all $j\geq N$. Hence $\ord(L_j)=\ord(L)+n$ for all $j\geq N$.
\end{proof}
\begin{cor}
\label{cor:stabilityindexfirstorder}
Let $L$ be of the form (\ref{eqn:firstorderoperators1}). Then $\sind(L)=\deg(q)$.
\end{cor}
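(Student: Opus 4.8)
The plan is to read $\sind(L)$ directly off the order sequence $\{\ord(L_j)\}_{j\geq 0}$ that Proposition~\ref{prop:orders} already computes. The bridge is Corollary~\ref{cor:criterion}: for each $i\geq 0$ there is a polynomial of degree $i$ with $\delta(L,p)=1$ exactly when $\ord(L_{i+1})=\ord(L_i)$. Hence, translating Definition~\ref{def:stabilityindex}, the index $\sind(L)$ is precisely the least $m$ such that $\ord(L_{j+1})=\ord(L_j)$ for all $j\geq m$, i.e.\ the least index from which the sequence $\ord(L_j)$ stays constant.

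First I would set $N=\deg(q)=\sum_{i=1}^n m_i$ and invoke Proposition~\ref{prop:orders} to record the shape of the sequence: it increases by one from $j=0$ to $j=n-1$, stays at the value $\ord(L)+n-1$ for $n-1\leq j\leq N-1$, then jumps to $\ord(L)+n$ at $j=N$ and remains there for all larger $j$. Thus the strict increases occur only at the steps $j\to j+1$ for $0\leq j\leq n-2$ together with the single step $N-1\to N$; since $N\geq n$, the last increase is this jump from $\ord(L_{N-1})=\ord(L)+n-1$ to $\ord(L_N)=\ord(L)+n$.

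It follows that $\ord(L_j)$ is constant for $j\geq N$ but not from $N-1$, so $\sind(L)=N=\deg(q)$. No real obstacle remains, since the analytic content---that $\delta(L,d)=0$ whenever $\deg(d)=\deg(q)-1$ (via Lemmas~\ref{lm:firstorderequivalence} and~\ref{lm:notrationalintegrable}, using the nonvanishing residue of $d/q$ at infinity) and that $L^*(qx^{\,i-N+1})$ is a degree-$i$ polynomial for every $i\geq N$---is already contained in the proof of Proposition~\ref{prop:orders}. The only point needing attention is conceptual rather than computational: by Definition~\ref{def:stabilityindex} the failures at degrees below $n-1$ do not matter, and it is the \emph{last} degree admitting no solution, namely $N-1$, that pins down the stability index.
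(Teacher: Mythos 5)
Your proposal is correct and is precisely the argument the paper leaves implicit: Corollary~\ref{cor:stabilityindexfirstorder} is stated without proof as an immediate consequence of Proposition~\ref{prop:orders}, and your translation via Corollary~\ref{cor:criterion} and Definition~\ref{def:stabilityindex} (the stability index is the first point after which the order sequence $\ord(L_j)$ is constant, and the last jump occurs at $j=N-1\to N$ with $N=\deg(q)$) is exactly the intended reading. The only stylistic shortcut you could take is to bypass the round trip through the order sequence entirely, since the proof of Proposition~\ref{prop:orders} already establishes the two facts that pin down the index directly from Definition~\ref{def:stabilityindex}: $\delta(L,d)=0$ for every $d$ of degree $N-1$, while $L^*(qx^{i-N+1})$ has degree $i$ for every $i\geq N$.
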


\noindent\textbf{Acknowledgement.}
We thank the anonymous referees for their careful reading and their valuable suggestions.



\bibliographystyle{plain}
\balance

\end{document}